 \newcommand{\bs}{\bigskip} 
 \newcommand{\hs}[1]{\hspace*{ #1 mm}}
 \def\bbox{\vrule height6pt width6pt depth1pt}
\theoremstyle{plain}
 \newtheorem{theorem}{Theorem}[section] \newtheorem{lemma}[theorem]{Lemma}
 \newtheorem{corollary}[theorem]{Corollary}
 \newenvironment{proof}{\par \noindent
            {\bf Proof. \hs{2}}}{\hfill$\Box$ \vspace*{3mm}}
 \newenvironment{proofof}[1]{\vspace*{5mm} \par \noindent
         {\bf Proof of #1.\hs{2}}}{\hfill$\Box$ \vspace*{3mm}}
\newcommand{\ignore}[1]{}
\begin{document} 
\pagestyle{plain}

\begin{center}
{\Large {\bf Unbounded-Error One-Way Classical and Quantum 

\medskip 

Communication Complexity}}
\bs

\begin{center}
{\sc Kazuo Iwama}$^1$\footnote{Supported in part by Scientific Research Grant, Ministry of Japan, 16092101.} 
\hspace{5mm} {\sc Harumichi Nishimura}$^2$\footnote{Supported in part by Scientific Research Grant, 
Ministry of Japan, 18244210.}
\hspace{5mm} {\sc Rudy Raymond}$^3$ \hspace{5mm} {\sc Shigeru Yamashita}$^4$\footnote{Supported in part by Scientific Research Grant, Ministry of Japan, 16092218 and 19700010.} 
\end{center}

$^1${School of Informatics, Kyoto University}

{\tt iwama@kuis.kyoto-u.ac.jp} 

$^2${School of Science, Osaka Prefecture University}

{\tt hnishimura@mi.s.osakafu-u.ac.jp}

$^3${Tokyo Research Laboratory, IBM Japan}

{\tt raymond@jp.ibm.com}

$^4${Graduate School of Information Science, Nara Institute of Science and Technology} 

{\tt ger@is.naist.jp}. 
\end{center}
\bs

\begin{abstract}
This paper studies the gap between quantum one-way communication complexity $Q(f)$ 
and its classical counterpart $C(f)$, under the {\em unbounded-error} setting, i.e., 
it is enough that the success probability is strictly greater than $1/2$. 
It is proved that for {\em any} (total or partial) Boolean function $f$, 
$Q(f)=\lceil C(f)/2 \rceil$, i.e., the former is always exactly 
one half as large as the latter. The result has an application to obtaining  
(again an exact) bound for the existence of $(m,n,p)$-QRAC which is the $n$-qubit 
random access coding that can recover any one of $m$ original bits with success probability $\geq p$. 
We can prove that $(m,n,>1/2)$-QRAC exists if and only if $m\leq 
 2^{2n}-1$. Previously, only the construction of QRAC using one qubit, 
 the existence of $(O(n),n,>1/2)$-RAC, and the non-existence of $(2^{2n},n,>1/2)$-QRAC were known. 
\end{abstract}
\section{Introduction}\label{sec:introduction}

{\em Communication complexity} is probably the most popular model for studying the performance gap 
between classical and quantum computations. Even if restricted to the one-way private-coin setting 
(which means no shared randomness or entanglement), several interesting developments have been reported 
in the last couple of years. For {\em promise problems}, i.e., 
if we are allowed to use the fact that inputs to Alice and Bob 
satisfy some special property, exponential gaps are known: 
Bar-Yossef, Jayram and Kerenidis \cite{BJK04} constructed a relation 
to provide an exponential gap, $\Theta(\log n)$ vs.\ $\Theta(\sqrt{n})$, 
between one-way quantum and classical communication complexities. 
Recently, Gavinsky et al.\ \cite{GKKRW06} showed that 
a similar exponential gap also exists for a {\em partial} Boolean function.  

For {\em total} Boolean functions, i.e., if there is no available promise, 
there are no known exponential or even non-linear gaps: 
As mentioned in \cite{Aar04}, the equality function is a total Boolean function 
for which the one-way quantum communication complexity is approximately one half, 
$(1/2+o(1))\log n$ vs.\ $(1-o(1))\log n$, of the classical counterpart. 
This is the largest known gap so far. On the other hand, there are total 
Boolean functions for which virtually no gap exists between quantum and 
classical communication complexities. For example, those complexity gaps 
are only a smaller order additive term, $(1-H(p))n$ vs.\ 
$(1-H(p))n+O(\log n)$, for the index function \cite{ANTV99,Nay99}, and 
$n-2\log\frac{1}{2p-1}$ \cite{NS06} vs.\ $n - O(\log\frac{1}{2p-1})$ 
\cite{KN97} for the inner product function, where $p$ is the success probability. Note 
that all the results so far mentioned are obtained under the {\em bounded-error} 
assumption, i.e., the success probability must be at least $1/2+\alpha$ 
for some constant $\alpha$, being independent of the size of Boolean 
functions.  

Thus there seem to be a lot of varieties, depending on specific Boolean functions, 
in the quantum/classical gap of one-way communication complexity. 
In this paper it is shown that such varieties completely disappear if we use 
the {\em unbounded-error} model where it is enough that the success probability 
is strictly greater than $1/2$.   

\subsection{Our Contribution}

We show that one-way quantum communication complexity of any (total or partial) Boolean function 
is always exactly (without an error of even $\pm 1$) one half of the one-way classical communication complexity 
in the {\em unbounded-error} setting. 
The study of unbounded-error (classical) communication complexity was initiated by Paturi and Simon \cite{PS86}. 
They characterized almost tightly the unbounded-error one-way communication complexity of Boolean function $f$, 
denoted by $C(f)$, in terms of a geometrical measure $k_f$ which is the minimum dimension of the arrangement of 
points and hyperplanes. Namely, they proved that $\lceil\log k_f\rceil\leq C(f)\leq \lceil \log k_f\rceil+1$. 
We show that such a characterization is also applicable to the unbounded-error one-way 
{\em quantum} communication complexity $Q(f)$. 
To this end, we need to link accurately the one-way quantum 
communication protocol to the arrangement of points and hyperplanes, 
which turns out to be possible using geometric facts on quantum states \cite{JS01,KK04}. 
As a result we show that $Q(f)=\lceil\log (k_f+1)/2\rceil$. 
Moreover, we also remove the small gap in \cite{PS86}, proving $C(f)=\lceil\log (k_f+1)\rceil$. 
This enables us to provide the exact relation between $Q(f)$ and $C(f)$, i.e., 
$Q(f)=\lceil C(f)/2\rceil$. 

Our characterizations of $Q(f)$ and $C(f)$ have an application to quantum random access coding 
(QRAC) and classical random access coding (RAC) introduced by Ambainis et al.\ \cite{ANTV99}. 
The $(m,n,p)$-QRAC (resp.\ $(m,n,p)$-RAC) is the $n$-qubit (resp.\ $n$-bit) coding that 
can recover any one of $m$ bits with success probability $\geq p$. 
The asymptotic relation among the three parameters $m,n,p$ was shown 
in \cite{ANTV99} and \cite{Nay99}: If $(m,n,p)$-QRAC exists, then $n\geq (1-H(p))m$, 
while there exists $(m,n,p)$-RAC if $n \leq (1-H(p))m+O(\log m)$. 
This relation gives us a tight bound on $n$ when $p$ is relatively far from $1/2$. 
Unfortunately these inequalities give us little information under the unbounded-error setting 
or when $p$ is very close to $1/2$, because the value of $(1-H(p))m$ become less than one. 
Hayashi et al. \cite{HINRY06} showed that $(m,n,p)$-QRAC with $p>1/2$ does not 
exist when $m=2^{2n}$. Our characterization directly shows that this is tight, that is, 
$(m,n,>1/2)$-QRAC exists if and only if $m\leq 2^{2n}-1$, which solves the remained open problem 
in \cite{HINRY06}. A similar tight result on the existence of $(m,n,>1/2)$-RAC is also obtained 
from our characterization. Moreover, we also give concrete constructions of such QRAC and RAC 
with an analysis of their success probability. 
 
\subsection{Related Work}
We mainly focus on the gap between classical and quantum communication complexities. 

{\bf Partial/Total Boolean Functions.} For total functions, the one-way quantum communication complexity 
is nicely characterized or bounded below in several ways. 
Klauck \cite{Kla00} characterized the one-way communication complexity of total Boolean functions 
by the number of different rows of the communication matrix in the {\em exact} setting, 
i.e., the success probability is one, and showed that it equals to 
the one-way deterministic communication complexity. Also, he gave a lower bound of bounded-error 
one-way quantum communication complexity of total Boolean functions by the VC dimension. 
Aaronson \cite{Aar04,Aar06} presented lower bounds of the one-way quantum communication complexity 
that are also applicable for partial Boolean functions. His lower bounds are given in terms 
of the deterministic or bounded-error classical communication complexity 
and the length of Bob's input, which are shown to be tight by using 
the partial Boolean function of Gavinsky et al.\ \cite{GKKRW06}. 

{\bf One-way/Two-way/SMP Models.} 
Two-way communication model is also popular. It is known that the two-way communication complexity 
has a non-linear quantum/classical gap for total functions in the bounded-error model. 
The current largest gap is quadratic. Buhrman, Cleve and Wigderson \cite{BCW98} 
showed that the almost quadratic gap, $O(\sqrt{n}\log n)$ vs.\ $\Omega(n)$, 
exists for the disjointness function. This gap was improved to $O(\sqrt{n})$ vs.\ $\Omega(n)$ in \cite{AA05}, 
which turned out to be optimal within a constant factor for the disjointness function \cite{Razb03}. 
On the contrary, in the unbounded-error setting, two-way communication model 
can be simulated by one-way model with only one bit additional communication \cite{PS86}. 
In the simultaneous message passing (SMP) model where we have a referee other than Alice and Bob, 
an exponential quantum/classical gap for total functions was shown by Buhrman et al.\ \cite{BCWW01}. 

{\bf Private-coin/Public-coin Models.} 
The exponential quantum/classical separations in \cite{BJK04} and \cite{GKKRW06} 
still hold under the public-coin model where Alice and Bob share random coins, 
since the one-way classical public-coin model can be simulated by 
the one-way classical private-coin model 
with additional $O(\log n)$-bit communication \cite{New91}. 
However, exponential quantum/classical separation for total functions remains open 
for all of the bounded-error two-way, one-way and SMP models. 
Note that the public-coin model is too powerful in the unbounded-error model: we can easily see that 
the unbounded-error one-way (classical or quantum) communication complexity of any function (or relation) 
is $1$ with prior shared randomness.  

{\bf Unbounded-error Models.} 
Since the seminal paper \cite{PS86}, the unbounded-error (classical) 
one-way communication complexity has been developed in the literature 
\cite{Fors02,FKLMSS01,FS06}. (Note that in the classical setting, the 
difference of communication cost between one-way and two-way models is 
at most $1$ bit.) Klauck \cite{Kla01} also studied a variant of the 
unbounded-error quantum and classical communication complexity, called 
the weakly unbounded-error communication complexity: the cost is 
communication (qu)bits plus $\log 1/\epsilon$ where $1/2+\epsilon$ is 
the success probability. He characterized the discrepancy, a useful 
measure for bounded-error communication complexity \cite{KN97}, in terms 
of the weakly unbounded-error communication complexity.

\section{Preliminaries}\label{sec:preliminaries}

For basic notations of quantum computing, see \cite{NC00}. In this 
paper, a ``function'' represents both total and partial Boolean functions. 

{\bf Communication Complexity.} 
The two-party communication complexity model is defined as follows.
One party, say Alice, has input $x$ from a finite set $X$ and another party, say Bob, input $y$ 
from a finite set $Y$. One of them, say, Bob wants to compute the value $f(x,y)$ for a function $f$. (In some cases, 
relations are considered instead of functions.) Their communication 
process is called a {\em quantum (resp.\ classical) protocol} if the 
communication is done by using quantum bits (resp.\ classical bits). In 
particular, the protocol is called {\em one-way} if the communication is 
only from Alice to Bob. The communication cost of the protocol is 
the maximum number of (qu)bits needed over all $(x,y)\in X\times Y$ by 
the protocol. The {\em unbounded-error} one-way quantum (resp.\ classical) communication complexity of $f$, 
denoted by $Q(f)$ (resp.\ $C(f)$), is the communication cost of the best one-way quantum 
(resp.\ classical) protocol with success probability strictly larger than $1/2$.  
In what follows, the term ``classical'' is often omitted when it is 
clear from the context. We denote the communication matrix of $f$ by $\pmb{M}_f=((-1)^{f(x,y)})$. 
(We use the bold font letters for denoting vectors and matrices.)

{\bf Arrangements.}
The notion of arrangement has often been used as one of the basic concepts in computer science 
such as computational geometry and learning theory. The arrangement of points and hyperplanes 
has two well-studied measures: the minimum dimension and margin complexity. 
We use the former, as in \cite{PS86}, to characterize the unbounded-error one-way communication complexity 
(while the latter was used in \cite{GKW06} to give a lower bound of bounded-error 
quantum communication complexity under prior shared entanglement). 
A point in $\mathbb{R}^n$ is denoted by the corresponding $n$-dimensional real vector. 
Also, a hyperplane $\{(a_i)\in\mathbb{R}^n\mid \sum_{i=1}^n a_ih_i=h_{n+1}\}$ 
on $\mathbb{R}^n$ is denoted by the $(n+1)$-dimensional real vector 
${\mathbf h}=(h_1,\ldots,h_n,h_{n+1})$, meaning that any point $(a_i)$ on the plane 
satisfies the equation $\sum_{i=1}^n a_ih_i =h_{n+1}$. 
A $\{1,-1\}$-valued matrix $\pmb{M}$ on $X \times Y$ is {\it realizable by an arrangement} 
of a set of $|X|$ points ${\mathbf p}_x=(p_{1}^x,\ldots,p_{k}^x)$ 
and a set of $|Y|$ hyperplanes ${\mathbf h}_y =(h_{1}^y,\ldots,h_{k}^y,h_{k+1}^y)$ in $\mathbb{R}^k$ 
if for any $x \in X$ and $y \in Y$, $\delta({\mathbf p}_x,{\mathbf h}_y):= 
\mbox{sign}(\sum_{i=1}^k {p_{i}^x h_{i}^y} - h_{k+1}^y)$ is equal to $\pmb{M}(x,y)$. 
Here, $\mbox{sign}(a)=1$ if $a>0$, $-1$ if $a<0$, and $0$ otherwise. Intuitively, 
the point lies above, below, or on the plane 
if $\delta({\mathbf p}_x,{\mathbf h}_y) =1$, $-1$, and $0$, respectively. 
The value $k$ is called the dimension of the arrangement. Let 
$k_{\pmb{M}}$ denote the smallest dimension of all arrangements that 
realize $\pmb{M}$. In particular, if $\pmb{M}=\pmb{M}_f$ then we denote 
$k_{\pmb{M}}$ by $k_f$, and say that $f$ is {\it realized} by the arrangement.

{\bf Bloch Vector Representations of Quantum States.} 
Mathematically, the $N$-level quantum state is represented 
by an $N\times N$ positive matrix $\pmb{\rho}$ satisfying $\mbox{Tr}(\pmb{\rho})=1$. 
(Note that if $N=2^n$ then $\pmb{\rho}$ is considered as a quantum state that consists of $n$ qubits.) 
In this paper we use $N\times N$ matrices $\pmb{I}_N,\pmb{\lambda}_1,\ldots,\pmb{\lambda}_{N^2-1}$, 
called {\em generator matrices}, as a basis to represent $N$-level 
quantum states. Here, $\pmb{I}_N$ is the identity matrix (the subscript $N$ is often omitted), 
and $\pmb{\lambda}_i$'s are the generators of $SU(N)$ 
satisfying (i) $\pmb{\lambda}_i = \pmb{\lambda}_i^{\dagger}$, 
(ii) $\mbox{Tr}(\pmb{\lambda}_i) = 0$ 
and (iii) $\mbox{Tr}(\pmb{\lambda}_i \pmb{\lambda}_j) = 2\delta_{ij}$. 
Then, the following lemma is known (see, e.g., \cite{KK04}). 

\begin{lemma}\label{mstate_vec}
For any $N$-level quantum state $\pmb{\rho}$ and any $N\times N$ generator matrices $\pmb{\lambda}_i$'s, 
there exists an $(N^2-1)$-dimensional vector ${\mathbf r}=(r_i)$ such that $\pmb{\rho}$  
can be written as 
\begin{equation}\label{mstate_vec_eq}
\pmb{\rho}= \frac{1}{N}\left(\pmb{I} + 
\sqrt{\frac{N(N-1)}{2}}\sum_{i=1}^{N^2-1}r_i \pmb{\lambda}_i\right).
\end{equation}
\end{lemma}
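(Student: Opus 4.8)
The plan is to recognize this lemma as nothing more than a change-of-basis statement in the real vector space $\mathcal{H}_N$ of $N\times N$ Hermitian matrices, which has real dimension $N^2$. I would equip $\mathcal{H}_N$ with the Hilbert--Schmidt pairing $\langle \pmb{A},\pmb{B}\rangle=\Tr(\pmb{A}\pmb{B})$, which is real-valued and positive definite on Hermitian matrices (since $\Tr(\pmb{A}^2)=\sum_i\lambda_i^2\geq 0$, with equality only for $\pmb{A}=0$). Properties (i)--(iii) of the generator matrices, together with the elementary identities $\Tr(\pmb{I})=N$, $\Tr(\pmb{I}\,\pmb{\lambda}_i)=\Tr(\pmb{\lambda}_i)=0$, and $\Tr(\pmb{I}^2)=N$, say precisely that the family $\{\pmb{I},\pmb{\lambda}_1,\dots,\pmb{\lambda}_{N^2-1}\}$ is an orthogonal set of $N^2$ nonzero vectors in $\mathcal{H}_N$; hence it is an orthogonal basis. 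The only point requiring a moment's care is the cardinality: $SU(N)$ has exactly $N^2-1$ generators, so this family has exactly $\dim\mathcal{H}_N$ elements and is therefore a \emph{full} basis, not merely an orthonormal system spanning a proper subspace.

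First I would use that a quantum state $\pmb{\rho}$, being positive semidefinite, is in particular Hermitian, so $\pmb{\rho}\in\mathcal{H}_N$ and may be expanded uniquely as $\pmb{\rho}=c_0\pmb{I}+\sum_{i=1}^{N^2-1}c_i\pmb{\lambda}_i$ for real scalars $c_0,c_i$. Taking the trace of both sides and using $\Tr(\pmb{\lambda}_i)=0$ and $\Tr(\pmb{\rho})=1$ gives $c_0N=1$, i.e.\ $c_0=1/N$. (Pairing both sides with $\pmb{\lambda}_j$ and using (ii)--(iii) also yields $c_j=\tfrac12\Tr(\pmb{\rho}\,\pmb{\lambda}_j)$, though this explicit formula is not needed for the statement.)

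Finally, to bring the expansion into the exact form displayed in the lemma, I would simply rescale the remaining coefficients: set
\[
r_i \;=\; \frac{c_i}{\frac{1}{N}\sqrt{\tfrac{N(N-1)}{2}}}\;=\;\frac{N\,c_i}{\sqrt{N(N-1)/2}},\qquad i=1,\dots,N^2-1,
\]
so that $\frac1N\sqrt{\tfrac{N(N-1)}{2}}\,r_i=c_i$ and hence $\pmb{\rho}=\frac1N\bigl(\pmb{I}+\sqrt{\tfrac{N(N-1)}{2}}\sum_i r_i\pmb{\lambda}_i\bigr)$, which is exactly \eqref{mstate_vec_eq}. The vector $\mathbf{r}=(r_i)\in\mathbb{R}^{N^2-1}$ is the claimed coherence (Bloch) vector.

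There is essentially no hard step here: the content is the linear algebra of the trace form on Hermitian matrices, and the ``obstacle,'' such as it is, is only bookkeeping --- confirming the dimension count above and noting that the particular normalization constant $\sqrt{N(N-1)/2}$ is a harmless convention (it is chosen so that pure states correspond to unit vectors $\mathbf{r}$, but the existence claim of this lemma is insensitive to it). I would also remark that no use is made of the positivity of $\pmb{\rho}$ beyond Hermiticity; positivity matters only for describing \emph{which} vectors $\mathbf{r}$ arise (the generalized Bloch body), a question this lemma does not address.
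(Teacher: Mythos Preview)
Your argument is correct and is precisely the standard proof of this well-known fact: $\{\pmb{I},\pmb{\lambda}_1,\dots,\pmb{\lambda}_{N^2-1}\}$ is an orthogonal basis of the $N^2$-dimensional real Hilbert space of Hermitian matrices under the trace form, so any Hermitian $\pmb{\rho}$ expands uniquely in it, and $\Tr(\pmb{\rho})=1$ pins the identity coefficient to $1/N$. The paper itself does not supply a proof of this lemma; it simply cites it as known (referring to \cite{KK04}), so there is nothing to compare against beyond noting that your write-up is exactly the argument one finds in such references.
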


The vector ${\mathbf r}$ in this lemma is often called the {\em Bloch vector} of $\pmb{\rho}$. 
Note that $\pmb{\lambda}_i$ can be any generator matrices satisfying the 
above conditions. In particular, it is well-known \cite{NC00} that for $N = 2$ one can choose $\pmb{\sigma}_1=
\left(\begin{array}{cc}
1 & 0\\
0 & -1
\end{array}
\right)$, 
$\pmb{\sigma}_2=
\left(\begin{array}{cc}
0 & 1\\
1 & 0
\end{array}
\right)$, and 
$\pmb{\sigma}_3=
\left(\begin{array}{cc}
0 & -\imath\\
\imath & 0
\end{array}
\right)$ 
of Pauli matrices as $\pmb{\lambda}_1,\pmb{\lambda}_2$, and $\pmb{\lambda}_3$, respectively. 
Generally for $N=2^n$, one can choose the tensor products of Pauli 
matrices, including $\pmb{I}$, for $\pmb{\lambda}_1,\ldots,\pmb{\lambda}_{N^2-1}$. 

Note that Lemma~\ref{mstate_vec} is a necessary condition for $\pmb{\rho}$ 
to be a quantum state. Although our knowledge of the sufficient condition 
is relatively weak (say, see \cite{JS01,KK04}), the following two lemmas 
on the mathematical description of $N$-level quantum states are enough for our purpose. 

\begin{lemma}[\cite{KK04}]\label{kk04lemma}
Let $r=\sqrt{\sum_{i=1}^{N^2-1}r_i^2}$. Then, $\pmb{\rho}
=\frac{1}{N} \left(\pmb{I} + \sqrt{\frac{N(N-1)}{2}} \sum_{i=1}^{N^2-1} r_i\pmb{\lambda}_i\right)$ 
is a quantum state if and only if 
$r \leq \sqrt{\frac{2}{N(N-1)}}\frac{1}{|m(\sum_{i=1}^{N^2-1}\left(\frac{r_i}{r}\right)\pmb{\lambda}_i)|}$, 
where $m(\pmb{A})$ denotes the minimum of eigenvalues of a matrix $\pmb{A}$, and $\pmb{\lambda}_i$'s 
are any generator matrices. 
\end{lemma}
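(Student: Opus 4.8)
The plan is to reduce the statement to positive semidefiniteness and then to a single scalar inequality on the smallest eigenvalue of the normalized Bloch matrix. First I would note that $\pmb{\rho}$ as written is automatically Hermitian, since each $\pmb{\lambda}_i=\pmb{\lambda}_i^{\dagger}$ and each $r_i$ is real, and that $\Tr(\pmb{\rho})=1$, since $\Tr(\pmb{I})=N$ and $\Tr(\pmb{\lambda}_i)=0$. Hence ``$\pmb{\rho}$ is a quantum state'' is equivalent to ``$\pmb{\rho}$ is positive semidefinite'', i.e., all eigenvalues of $\pmb{\rho}$ are nonnegative. (If $r=0$ then $\pmb{\rho}=\pmb{I}/N$ is the maximally mixed state, so I would assume $r>0$ from now on, and then the quantity $\frac{r_i}{r}$ is well defined.)

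Next I would put $\pmb{A}:=\sum_{i=1}^{N^2-1}\frac{r_i}{r}\pmb{\lambda}_i$, so that $\pmb{\rho}=\frac1N\pmb{I}+c\,\pmb{A}$ with $c:=\frac{r}{N}\sqrt{\frac{N(N-1)}{2}}>0$. The matrix $\pmb{A}$ is Hermitian, hence diagonalizable with real eigenvalues, traceless (so its eigenvalues sum to $0$), and nonzero; consequently its minimum eigenvalue $m(\pmb{A})$ is strictly negative and $|m(\pmb{A})|=-m(\pmb{A})$. The eigenvalues of $\pmb{\rho}$ are exactly the numbers $\frac1N+c\mu$ as $\mu$ runs over the eigenvalues of $\pmb{A}$, and since $c>0$ the most negative of these is $\frac1N+c\,m(\pmb{A})$. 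Therefore $\pmb{\rho}\succeq 0$ if and only if $\frac1N+c\,m(\pmb{A})\ge 0$, i.e., $c\,|m(\pmb{A})|\le\frac1N$, i.e., $c\le\frac{1}{N|m(\pmb{A})|}$.

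Finally I would substitute $c=\frac{r}{N}\sqrt{\frac{N(N-1)}{2}}$ back into $c\le\frac{1}{N|m(\pmb{A})|}$ and clear constants, which turns the condition into $r\le\sqrt{\frac{2}{N(N-1)}}\cdot\frac{1}{|m(\pmb{A})|}$ with $\pmb{A}=\sum_{i}\bigl(\frac{r_i}{r}\bigr)\pmb{\lambda}_i$, exactly the stated bound. There is no deep obstacle in this argument; the only steps needing a little care are verifying that it is the \emph{minimum} eigenvalue of $\pmb{A}$ that gives the binding constraint (which uses $c\ge 0$) and that $m(\pmb{A})<0$ (which uses that $\pmb{A}$ is traceless and nonzero), so that dividing through by $|m(\pmb{A})|$ is legitimate and the right-hand side is finite and positive.
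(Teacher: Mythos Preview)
Your argument is correct and is exactly the natural route: reduce to positive semidefiniteness, rewrite $\pmb{\rho}=\frac{1}{N}\pmb{I}+c\pmb{A}$ with $c>0$, and observe that the binding eigenvalue constraint is $\frac{1}{N}+c\,m(\pmb{A})\ge 0$, which after clearing constants gives the stated inequality. The only subtleties you flag (that $c\ge 0$ makes the minimum eigenvalue the relevant one, and that tracelessness plus $\pmb{A}\neq 0$ forces $m(\pmb{A})<0$ so division is legitimate) are handled correctly.

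As for comparison with the paper: the paper does not actually prove this lemma. It is quoted from \cite{KK04} and simply used as a black box (in fact, only the weaker consequence in Lemma~\ref{balls} and Lemma~\ref{embed_states} is really needed downstream). Your proof is essentially the argument one finds in the cited source, so there is nothing to contrast.
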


\begin{lemma}[\cite{JS01}]\label{balls}
Let ${B}(\mathbb{R}^{N^2-1})$ be the set of Bloch vectors 
of all $N$-level quantum states. Let $D_{r_s}(\mathbb{R}^{N^2-1})=\{{\mathbf r}\in\mathbb{R}^{N^2-1} 
\mid |{\mathbf r}| \le \frac{1}{{N-1}}\}$ (called the small ball), 
and $D_{r_l}({\mathbf R}^{N^2-1})=\{{\mathbf r}\in\mathbb{R}^{N^2-1} 
\mid |{\mathbf r}| \le 1 \}$ (called the large ball). 
Then, $D_{r_s}(\mathbb{R}^{N^2-1})\subseteq{B}(\mathbb{R}^{N^2-1})
\subseteq D_{r_l}(\mathbb{R}^{N^2-1})$. 
\end{lemma}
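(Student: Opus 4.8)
The plan is to obtain both inclusions from Lemma~\ref{kk04lemma} together with two short trace computations. First, for a unit vector $(s_i)_{i=1}^{N^2-1}$ set $\pmb{A}=\sum_i s_i\pmb{\lambda}_i$; by conditions (i)--(iii) the matrix $\pmb{A}$ is Hermitian with $\Tr(\pmb{A})=0$ and $\Tr(\pmb{A}^2)=\sum_{i,j}s_is_j\Tr(\pmb{\lambda}_i\pmb{\lambda}_j)=2\sum_i s_i^2=2$. Second, writing $\pmb{\rho}$ as in \eqref{mstate_vec_eq} with $|{\mathbf r}|=r$, squaring and taking the trace kills the linear term and leaves $\Tr(\pmb{\rho}^2)=\frac{1}{N^2}\bigl(N+\frac{N(N-1)}{2}\cdot 2r^2\bigr)=\frac{1}{N}\bigl(1+(N-1)r^2\bigr)$, again using condition (iii).

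For the outer inclusion $B(\mathbb{R}^{N^2-1})\subseteq D_{r_l}(\mathbb{R}^{N^2-1})$ I would simply use that every density matrix has $\Tr(\pmb{\rho}^2)\le 1$ (its eigenvalues form a probability vector); combined with the identity above this gives $(N-1)r^2\le N-1$, i.e.\ $r\le 1$.

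For the inner inclusion $D_{r_s}(\mathbb{R}^{N^2-1})\subseteq B(\mathbb{R}^{N^2-1})$, by Lemma~\ref{kk04lemma} it is enough to prove $|m(\pmb{A})|\le\sqrt{2(N-1)/N}$ for every $\pmb{A}$ of the above form: then the bound in that lemma is at least $\sqrt{\frac{2}{N(N-1)}}\cdot\sqrt{\frac{N}{2(N-1)}}=\frac{1}{N-1}$, so every ${\mathbf r}$ with $|{\mathbf r}|\le\frac{1}{N-1}$ is a legitimate Bloch vector. To bound $|m(\pmb{A})|$ I would order the (real) eigenvalues $\mu_1\le\cdots\le\mu_N$; since $\sum_k\mu_k=0$ and $\sum_k\mu_k^2=2\ne 0$, the smallest one is strictly negative, say $\mu_1=-a$ with $a>0$, and the remaining $N-1$ eigenvalues sum to $a$. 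Cauchy--Schwarz gives $\sum_{k\ge 2}\mu_k^2\ge a^2/(N-1)$, hence $2=a^2+\sum_{k\ge2}\mu_k^2\ge a^2\cdot\frac{N}{N-1}$, so $a\le\sqrt{2(N-1)/N}$, as needed.

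I do not expect a genuine obstacle: both directions are a couple of lines once the trace identities are set up. The only points requiring care are using condition (iii) with the normalization $\Tr(\pmb{\lambda}_i\pmb{\lambda}_j)=2\delta_{ij}$ (an extra factor of two in several places), handling the trivial case ${\mathbf r}=\pmb{0}$ (the maximally mixed state $\pmb{I}/N$) separately in the inner inclusion, and noting that $\Tr(\pmb{A}^2)=2$ forces a strictly negative eigenvalue so that the eigenvalue-sum estimate for $|m(\pmb{A})|=-\mu_1$ is valid. One could of course instead cite \cite{JS01} verbatim; reproving it keeps the argument self-contained given Lemma~\ref{kk04lemma}.
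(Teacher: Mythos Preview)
Your argument is correct. Both inclusions are handled cleanly: the outer one via $\Tr(\pmb{\rho}^2)\le 1$ and the trace identity $\Tr(\pmb{\rho}^2)=\frac{1}{N}\bigl(1+(N-1)r^2\bigr)$, and the inner one by bounding $|m(\pmb{A})|\le\sqrt{2(N-1)/N}$ through the Cauchy--Schwarz estimate on the eigenvalues and then feeding this into Lemma~\ref{kk04lemma}. The edge case ${\mathbf r}=\pmb{0}$ and the normalization factor of $2$ are correctly accounted for.

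There is, however, nothing to compare against: the paper does not prove Lemma~\ref{balls} but simply quotes it from \cite{JS01}, just as Lemma~\ref{kk04lemma} is quoted from \cite{KK04}. So your proposal is not an alternative to the paper's proof but rather a self-contained derivation that the paper chose to outsource. What you gain is independence from the cited reference (modulo still relying on Lemma~\ref{kk04lemma}); what the paper gains by citing is brevity, since the lemma is used only as a black box in Lemma~\ref{embed_states}. Either choice is fine; your write-up would serve well as a short appendix if one wanted the paper to be self-contained on this point.
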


\section{Quantum Tight Bound}

In \cite{HINRY06}, we gave a geometric view of the quantum protocol on random access coding. 
It turns out that this view together with the notion of arrangements is a powerful tool 
for characterizing the unbounded-error one-way quantum communication complexity.

\begin{theorem}\label{qcc}
$Q(f)= \lceil\log(k_f+1)/2\rceil$ for every function $f:X\times Y\rightarrow \{0,1\}$.
\end{theorem}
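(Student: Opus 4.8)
The plan is to establish the equality $Q(f)=\lceil\log(k_f+1)/2\rceil$ by proving two inequalities, each via an explicit translation between one-way quantum protocols with success probability strictly above $1/2$ and arrangements of points and hyperplanes.

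For the lower bound $Q(f)\geq\lceil\log(k_f+1)/2\rceil$, I would start from an arbitrary $n$-qubit one-way protocol: Alice sends a state $\pmb{\rho}_x$ (of dimension $N=2^n$) depending on her input $x$, and Bob applies a two-outcome POVM $\{\pmb{E}_y,\pmb{I}-\pmb{E}_y\}$ depending on $y$, outputting the bit whose probability exceeds $1/2$. The success condition $\Tr(\pmb{\rho}_x\pmb{E}_y)>1/2$ (or $<1/2$) according to $f(x,y)$ is an affine inequality in the Bloch vector of $\pmb{\rho}_x$: writing $\pmb{\rho}_x=\frac1N(\pmb{I}+\sqrt{N(N-1)/2}\sum_i r_i^x\pmb{\lambda}_i)$ via Lemma~\ref{mstate_vec} and expanding $\pmb{E}_y$ similarly, $\Tr(\pmb{\rho}_x\pmb{E}_y)-1/2$ becomes a linear function of the vector $(r_i^x)\in\mathbb{R}^{N^2-1}$ with coefficients determined by $y$. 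Taking $\mathbf{p}_x=(r_i^x)$ as points and reading off $\mathbf{h}_y$ from $\pmb{E}_y$ gives an arrangement in dimension $N^2-1=4^n-1$ realizing $\pmb{M}_f$ (with $\mbox{sign}$ matching because the protocol never ties — strict inequality). Hence $k_f\leq 4^n-1$, i.e. $n\geq\log(k_f+1)/2$, and since $n$ is an integer, $Q(f)\geq\lceil\log(k_f+1)/2\rceil$.

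For the upper bound $Q(f)\leq\lceil\log(k_f+1)/2\rceil$, I would go the other direction: given an arrangement of dimension $k=k_f$ realizing $\pmb{M}_f$, set $n=\lceil\log(k_f+1)/2\rceil$ so that $N=2^n$ satisfies $N^2-1\geq k$; pad the point vectors $\mathbf{p}_x$ with zeros to length $N^2-1$. The points can be scaled to lie inside the small ball $D_{r_s}$ of Lemma~\ref{balls} (scaling all points and hyperplane offsets together preserves all the signs, and no point is on a hyperplane so a strict margin exists after scaling), making each $\frac1N(\pmb{I}+\sqrt{N(N-1)/2}\sum_i p_i^x\pmb{\lambda}_i)$ a legitimate quantum state $\pmb{\rho}_x$ — this is exactly where Lemma~\ref{kk04lemma} or the inclusion in Lemma~\ref{balls} is invoked. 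For each $y$, the hyperplane vector $\mathbf{h}_y$ is similarly turned into a Hermitian operator whose trace against $\pmb{\rho}_x$ reproduces $\frac1N(\sum_i p_i^x h_i^y - h_{k+1}^y)$ up to positive factors and an additive $1/2$; one checks this operator can be shifted and scaled into a valid POVM element $\pmb{E}_y$ (the small-ball radius leaves enough slack so that $\pmb{0}\leq\pmb{E}_y\leq\pmb{I}$). Then Bob's measurement succeeds with probability $>1/2$ exactly when $\mbox{sign}(\sum_i p_i^x h_i^y - h_{k+1}^y)=\pmb{M}_f(x,y)$, which holds by realizability. This yields an $n$-qubit protocol, so $Q(f)\leq n=\lceil\log(k_f+1)/2\rceil$.

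The main obstacle I anticipate is the bookkeeping in the upper bound that simultaneously forces the point vectors into the (small) Bloch ball \emph{and} keeps the corresponding measurement operators inside $[0,\pmb{I}]$, while preserving every strict sign — one global rescaling must do both jobs, and one has to be careful that the generator matrices $\pmb{\lambda}_i$ chosen (tensor products of Pauli matrices) give the right normalization in Lemma~\ref{mstate_vec} so the arithmetic closes. A secondary subtlety is the handling of ties: the arrangement definition allows $\delta(\mathbf{p}_x,\mathbf{h}_y)=0$, whereas an unbounded-error protocol demands strict inequality, so in the lower-bound direction I should note that a minimal-dimension arrangement realizing a $\{1,-1\}$-matrix can be perturbed to remove all incidences without increasing the dimension, and in the upper-bound direction that strict realizability is what we are handed. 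Once these are pinned down, combining the two inequalities gives the theorem.
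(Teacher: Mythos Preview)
Your proposal is correct and follows essentially the same approach as the paper: both directions pass through the Bloch-vector expansion of states and two-outcome POVMs to identify $\Tr(\pmb{E}_y\pmb{\rho}_x)-1/2$ with an affine form in $\mathbb{R}^{N^2-1}$, and the upper bound is obtained by shrinking the arrangement points into the small ball of Lemma~\ref{balls} while simultaneously scaling the hyperplane data into valid POVM elements (the paper packages this as Lemmas~\ref{embed_states} and~\ref{embed_meas}, using separate but coupled constants $\gamma,\alpha,\beta$ rather than a single global rescaling). Your worry about ties is unnecessary in both directions---the protocol gives strict inequalities by definition, and an arrangement realizing a $\{1,-1\}$-matrix already has $\delta(\mathbf{p}_x,\mathbf{h}_y)\neq 0$ everywhere---so no perturbation argument is needed.
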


The outline of the proof is as follows: In Lemma~\ref{neccesary} we 
first establish a relation similar to Lemma~\ref{mstate_vec} between a 
POVM (Positive Operator-Valued Measure) $\{\pmb{E},\pmb{I-E}\}$  
over $n$ qubits and a $(2^{2n}-1)$-dimensional (Bloch) vector ${\mathbf h}(\pmb{E})$. 
Then, we prepare Lemma~\ref{trace_bloch} to show that the measurement results of POVM $\{\pmb{E},\pmb{I-E}\}$ 
on a state $\pmb{\rho}$ correspond to the arrangement operation 
$\delta({\mathbf r}(\pmb{\rho}),{\mathbf h}(\pmb{E}))$, where ${\mathbf r}(\pmb{\rho})$ 
is the Bloch vector for $\pmb{\rho}$.  

Now in order to prove $Q(f) \ge \lceil\log(k_f+1)/2\rceil$, 
suppose that there is a protocol whose communication complexity is $n$. 
This means for any $x \in X$ and $y \in Y$, we have $n$-qubit states 
$\pmb{\rho}_x$ and POVMs $\{\pmb{E}_y,\pmb{I}-\pmb{E}_y\}$ such that: 
(i) the dimensions of ${\mathbf r}(\pmb{\rho}_x)$ and ${\mathbf h}(\pmb{E}_y)$ 
are $2^{2n}-1$ and $2^{2n}$ (by Lemmas~\ref{mstate_vec} and 
\ref{neccesary}, and note that $N=2^n$), and (ii) $\pmb{M}_f(x,y)=\mbox{sign}(\mathrm{Tr}(\pmb{E}_y\pmb{\rho}_x)-1/2)
=\delta({\mathbf r}(\pmb{\rho}_x),{\mathbf h}(\pmb{E}_y))$ 
(the first equality by the assumption and the second one by Lemma \ref{trace_bloch}). 
By (ii) we can conclude that the arrangement of points ${\mathbf r}(\pmb{\rho}_x)$ 
and hyperplanes ${\mathbf h}(\pmb{E}_y)$ realizes $f$, and by (i) its dimension is $2^{2n}-1$. 
Thus, $k_f$ is at most $2^{2n}-1$, implying that $n$ ($=Q(f)$) $\ge \lceil\log(k_f+1)/2\rceil$. 

To prove the converse, suppose that there exists an $(N^2-1)$-dimensional 
arrangement of points ${\mathbf r}_x$ and hyperplanes ${\mathbf h}_y$ 
realizing $f$. For simplicity, suppose that $N^2-1=k_f$ (see the proof of Theorem \ref{qcc} for the details). 
Let us fix some generator matrices $\pmb{\lambda}_i$'s. However, 
$\pmb{\rho}_x$ obtained directly from $\pmb{\lambda}_i$'s and ${\mathbf r}_x$ 
by Eq.(\ref{mstate_vec_eq}) may not be a valid quantum state. 
Fortunately, by Lemma~\ref{embed_states} we can simply multiply ${\mathbf r}_x$ by a fixed constant 
factor to obtain ${\mathbf r}'_x$ such that ${\mathbf r}'_x$ lies in the 
small ball in Lemma~\ref{balls} and therefore corresponds to an $n$-qubit state $\pmb{\rho}({\mathbf r}_x')$. 
Similarly, by Lemma~\ref{embed_meas} 
we can get ${\mathbf h}'_y$ corresponding to POVM $\{\pmb{E}({\mathbf h}_y'),\pmb{I}-\pmb{E}({\mathbf h}_y')\}$. 
Obviously, the arrangement of points ${\mathbf r}_x'$ and hyperplanes ${\mathbf h}_y'$ realizes $f$, 
its dimension is the same $N^2-1$ and the corresponding $\pmb{\rho}({\mathbf r}'_x)$ and 
$\{\pmb{E}({\mathbf h}'_y),\pmb{I}-\pmb{E}({\mathbf h}'_y)\}$ are an $N$-level (or $\lceil\log N\rceil$-qubit) 
quantum state and a POVM over $N$-level quantum states, respectively. 
Now, by Lemma~\ref{trace_bloch}, we can compute $f(x,y)$ 
by $\mbox{sign}(\mathrm{Tr}(\pmb{E}({\mathbf h}_y')\pmb{\rho}({\mathbf r}_x'))-1/2)$, 
which means $Q(f) \leq \lceil\log N\rceil = \lceil\log(k_f+1)/2\rceil$. 

According to the above outline, we start to present technical lemmas 
whose details are omitted. The following lemma, shown similarly as Lemma \ref{mstate_vec}, 
is a necessary condition for $\{\pmb{E},\pmb{I}-\pmb{E}\}$ to be a POVM. 

\begin{lemma}\label{neccesary}
For any POVM $\{\pmb{E},\pmb{I}-\pmb{E}\}$ over $N$-level quantum states 
and $N\times N$ generator matrices $\pmb{\lambda}_i$'s, there exists 
an $N^2$-dimensional vector ${\mathbf e}=(e_i)$ such that 
$\pmb{E}$ can be written as $\pmb{E} = e_{N^2} \pmb{I} + \sum_{i=1}^{N^2-1} e_i\pmb{\lambda}_i$.  
\end{lemma}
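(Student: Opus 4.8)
The plan is to exploit the fact that the generator matrices $\pmb{\lambda}_1,\ldots,\pmb{\lambda}_{N^2-1}$, together with $\pmb{I}$, form a basis of the real vector space of Hermitian $N\times N$ matrices, exactly as in the proof of Lemma~\ref{mstate_vec}. The only difference is that a POVM element $\pmb{E}$ carries no trace-normalization constraint, so the coefficient of $\pmb{I}$ is left as a free parameter; this is why the statement writes $e_{N^2}\pmb{I}$ rather than $\frac{1}{N}\pmb{I}$, and why the prefactor $\sqrt{N(N-1)/2}$ appearing in Eq.~(\ref{mstate_vec_eq}) does not show up here.

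First I would recall that the space $\mathcal{H}_N$ of $N\times N$ Hermitian matrices is a real vector space of dimension $N^2$, equipped with the Hilbert--Schmidt inner product $\langle\pmb{A},\pmb{B}\rangle=\mathrm{Tr}(\pmb{A}^{\dagger}\pmb{B})$. Using conditions (i)--(iii) on the generators, namely $\mathrm{Tr}(\pmb{\lambda}_i\pmb{\lambda}_j)=2\delta_{ij}$ and $\mathrm{Tr}(\pmb{\lambda}_i)=\mathrm{Tr}(\pmb{I}\pmb{\lambda}_i)=0$, together with $\mathrm{Tr}(\pmb{I}\pmb{I})=N$, the family $\{\pmb{I},\pmb{\lambda}_1,\ldots,\pmb{\lambda}_{N^2-1}\}$ is orthogonal, hence linearly independent, and therefore (being $N^2$ vectors in an $N^2$-dimensional space) a basis of $\mathcal{H}_N$.

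Next, since $\{\pmb{E},\pmb{I}-\pmb{E}\}$ is a POVM, $\pmb{E}$ is positive semidefinite and in particular Hermitian, so $\pmb{E}\in\mathcal{H}_N$. Expanding $\pmb{E}$ in the above basis produces real coefficients $e_1,\ldots,e_{N^2}$ with $\pmb{E}=e_{N^2}\pmb{I}+\sum_{i=1}^{N^2-1}e_i\pmb{\lambda}_i$; projecting onto each basis element gives the explicit formulas $e_i=\frac{1}{2}\mathrm{Tr}(\pmb{\lambda}_i\pmb{E})$ for $1\le i\le N^2-1$ and $e_{N^2}=\frac{1}{N}\mathrm{Tr}(\pmb{E})$. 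These numbers are real because $\pmb{E}$ and the $\pmb{\lambda}_i$'s are Hermitian, so each trace is real. This establishes the existence claim.

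There is no substantial obstacle: the content is elementary linear algebra, and the main thing to get right is bookkeeping --- not to carry over the $\frac{1}{N}$ and the $\sqrt{N(N-1)/2}$ from Lemma~\ref{mstate_vec}, since those arise precisely from $\pmb{\rho}$ having unit trace and lying in the Bloch body, whereas no such normalization is imposed on $\pmb{E}$. Positivity of $\pmb{E}$ and of $\pmb{I}-\pmb{E}$ does constrain the vector ${\mathbf e}$ --- that is where Lemmas~\ref{kk04lemma} and~\ref{balls} will re-enter when we later build the arrangement --- but it is irrelevant to the existence of the decomposition asserted here.
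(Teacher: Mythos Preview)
Your proposal is correct and matches the paper's approach: the paper explicitly says the lemma is ``shown similarly as Lemma~\ref{mstate_vec}'' and omits the details, and your argument---that $\{\pmb{I},\pmb{\lambda}_1,\ldots,\pmb{\lambda}_{N^2-1}\}$ is an orthogonal basis of the real space of $N\times N$ Hermitian matrices under the Hilbert--Schmidt inner product, into which the Hermitian operator $\pmb{E}$ can be expanded---is exactly that. Your observation about why the trace-normalization factor and the $\sqrt{N(N-1)/2}$ scaling are absent is also on point.
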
 

We call the above vector ${\mathbf e}=(e_1,\ldots,e_{N^2-1},e_{N^2})$ 
the {\em Bloch vector} of POVM $\{\pmb{E},\pmb{I}-\pmb{E}\}$. 
The next lemma relates the probability distribution 
of binary values obtained by measuring a quantum state $\pmb{\rho}$ 
with a POVM $\{\pmb{E},\pmb{I}-\pmb{E}\}$ with their Bloch vectors. 

\begin{lemma}\label{trace_bloch}
Let ${\mathbf r}=(r_i) \in {\mathbf R}^{N^2-1}$ and ${\mathbf e}=(e_i) \in {\mathbf R}^{N^2}$ 
be the Bloch vectors of an $N$-level quantum state $\pmb{\rho}$ 
and a POVM $\{\pmb{E},\pmb{I}-\pmb{E}\}$.  
Then, the probability that the measurement value $0$ is obtained is  
\[
\mathrm{Tr}\left(\pmb{E}\pmb{\rho}\right) = e_{N^2} + \sqrt{\frac{2(N-1)}{N}} \sum_{i=1}^{N^2-1} r_ie_i.
\]
\end{lemma}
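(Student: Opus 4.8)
The plan is to prove Lemma~\ref{trace_bloch} by a direct computation of the trace $\mathrm{Tr}(\pmb{E}\pmb{\rho})$ using the two Bloch-vector expansions we already have in hand, namely Eq.~(\ref{mstate_vec_eq}) for $\pmb{\rho}$ and the expansion $\pmb{E} = e_{N^2}\pmb{I} + \sum_{i=1}^{N^2-1} e_i\pmb{\lambda}_i$ from Lemma~\ref{neccesary}. First I would substitute both expansions into the product and expand by bilinearity of the trace, obtaining four types of terms: $e_{N^2}\cdot\frac{1}{N}\mathrm{Tr}(\pmb{I})$, the two cross terms $e_{N^2}\cdot\frac{1}{N}\sqrt{\frac{N(N-1)}{2}}\sum_i r_i\,\mathrm{Tr}(\pmb{\lambda}_i)$ and $\frac{1}{N}\sum_i e_i\,\mathrm{Tr}(\pmb{\lambda}_i)$, and finally the quadratic term $\frac{1}{N}\sqrt{\frac{N(N-1)}{2}}\sum_{i,j} r_i e_j\,\mathrm{Tr}(\pmb{\lambda}_i\pmb{\lambda}_j)$.

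Next I would evaluate each term using the defining properties (i)--(iii) of the generator matrices stated in the Preliminaries. Since $\mathrm{Tr}(\pmb{I})=\mathrm{Tr}(\pmb{I}_N)=N$, the first term is exactly $e_{N^2}$. Since $\mathrm{Tr}(\pmb{\lambda}_i)=0$ for every $i$, both cross terms vanish. For the quadratic term, property (iii) gives $\mathrm{Tr}(\pmb{\lambda}_i\pmb{\lambda}_j)=2\delta_{ij}$, which collapses the double sum to $2\sum_{i=1}^{N^2-1} r_i e_i$, leaving $\frac{1}{N}\sqrt{\frac{N(N-1)}{2}}\cdot 2\sum_i r_i e_i$. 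It then remains only to simplify the scalar prefactor: $\frac{2}{N}\sqrt{\frac{N(N-1)}{2}} = \frac{2}{N}\cdot\frac{\sqrt{N(N-1)}}{\sqrt{2}} = \sqrt{2}\cdot\frac{\sqrt{N(N-1)}}{N} = \sqrt{\frac{2N(N-1)}{N^2}} = \sqrt{\frac{2(N-1)}{N}}$, which is precisely the claimed coefficient. Assembling the surviving pieces yields $\mathrm{Tr}(\pmb{E}\pmb{\rho}) = e_{N^2} + \sqrt{\frac{2(N-1)}{N}}\sum_{i=1}^{N^2-1} r_i e_i$, as desired.

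This argument is essentially a routine algebraic manipulation, so there is no deep obstacle; the only point requiring mild care is bookkeeping the prefactors correctly and making sure the orthogonality relation (iii) is applied with the right normalization constant $2$ (rather than $1$), since an off-by-$\sqrt{2}$ error there would propagate into the final coefficient. One should also note that the identity is independent of the particular choice of generator matrices $\pmb{\lambda}_i$, which is automatic because both Bloch vectors $\mathbf{r}$ and $\mathbf{e}$ were defined relative to the same fixed family $\pmb{\lambda}_1,\dots,\pmb{\lambda}_{N^2-1}$; I would remark on this briefly so that the lemma can be invoked freely in the proof of Theorem~\ref{qcc}. Finally, I would observe that the same computation shows the probability of outcome $1$ is $\mathrm{Tr}((\pmb{I}-\pmb{E})\pmb{\rho}) = 1 - e_{N^2} - \sqrt{\frac{2(N-1)}{N}}\sum_i r_i e_i$, which is consistent since the two probabilities sum to $1$; this sanity check is not needed for the statement but confirms the constants.
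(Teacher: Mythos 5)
Your computation is correct and is precisely the routine expansion the paper omits ("technical lemmas whose details are omitted"): substitute the two Bloch expansions, use $\mathrm{Tr}(\pmb{I})=N$, $\mathrm{Tr}(\pmb{\lambda}_i)=0$, $\mathrm{Tr}(\pmb{\lambda}_i\pmb{\lambda}_j)=2\delta_{ij}$, and simplify the prefactor. All constants check out, so there is nothing to add.
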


The last two lemmas provide a shrink-and-shift mapping from any real 
vectors and hyperplanes to, respectively, Bloch vectors of quantum states lying in the small ball of 
Lemma~\ref{balls} and POVMs.

\begin{lemma}\label{embed_states}
(1) 
For any ${\mathbf r}=(r_1,r_2,\ldots,r_k)\in\mathbb{R}^k$ and $N$ satisfying $N^2\geq k+1$, 
\[
\pmb{\rho}({\mathbf r}) = \frac{1}{N}\left( \pmb{I} +\sqrt{\frac{N(N-1)}{2}} \sum_{i=1}^{k} 
\left(\frac{r_i}{|{\mathbf r}|(N-1)}\right) \pmb{\lambda}_i 
\right)
\]
is an $N$-level quantum state. 

(2) If $\pmb{\rho}({\mathbf r})$ is a quantum state, then $\pmb{\rho}(\gamma{\mathbf r})$ 
is also a quantum state for any $\gamma\leq 1$. 
\end{lemma}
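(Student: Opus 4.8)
The plan is to reduce both claims to the description of the set of Bloch vectors of $N$-level states given in Lemma~\ref{balls} (with Lemma~\ref{kk04lemma} as an alternative route for part~(2)); the only real computation is a one-line norm estimate, and everything else is convexity. Throughout, recall that $SU(N)$ has exactly $N^2-1$ generators, so a Bloch vector lives in $\mathbb{R}^{N^2-1}$, and the hypothesis $N^2\ge k+1$ in part~(1) is precisely what lets us store the $k$ coordinates of $\mathbf{r}$ in a Bloch vector (as the coefficients of $\pmb{\lambda}_1,\dots,\pmb{\lambda}_k$) and pad the remaining coordinates with zeros.

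\textbf{Part (1).} First I would identify the Bloch vector of the candidate operator. Comparing the displayed formula for $\pmb{\rho}(\mathbf{r})$ with Eq.~(\ref{mstate_vec_eq}), $\pmb{\rho}(\mathbf{r})$ is exactly the operator associated (for the fixed generators $\pmb{\lambda}_i$) with the vector $\mathbf{s}\in\mathbb{R}^{N^2-1}$ whose first $k$ entries are $s_i = r_i/(|\mathbf{r}|(N-1))$ and whose remaining entries are $0$. A direct computation then gives $|\mathbf{s}| = \frac{1}{|\mathbf{r}|(N-1)}\sqrt{\sum_{i=1}^{k} r_i^2} = \frac{1}{N-1}$, so $\mathbf{s}$ lies in the small ball $D_{r_s}(\mathbb{R}^{N^2-1})$ of Lemma~\ref{balls}. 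By that lemma $\mathbf{s}\in B(\mathbb{R}^{N^2-1})$, i.e.\ $\mathbf{s}$ is the Bloch vector of a genuine $N$-level quantum state, and by construction this state is $\pmb{\rho}(\mathbf{r})$, which proves the claim. The degenerate input $\mathbf{r}=\mathbf{0}$ is handled trivially: there $\pmb{\rho}(\mathbf{r})$ is the maximally mixed state $\frac{1}{N}\pmb{I}$ (or one simply excludes it, the normalization being undefined).

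\textbf{Part (2).} Here I would use that the assignment $\mathbf{v}\mapsto\frac{1}{N}\bigl(\pmb{I}+\sqrt{N(N-1)/2}\sum_i v_i\pmb{\lambda}_i\bigr)$ is \emph{affine} in $\mathbf{v}$ and sends $\mathbf{v}=\mathbf{0}$ to the maximally mixed state $\frac{1}{N}\pmb{I}$, which is a density matrix. Hence $\pmb{\rho}(\gamma\mathbf{r}) = \gamma\,\pmb{\rho}(\mathbf{r}) + (1-\gamma)\frac{1}{N}\pmb{I}$, which for $0\le\gamma\le1$ is a convex combination of two $N$-level quantum states ($\pmb{\rho}(\mathbf{r})$ by hypothesis, and $\frac{1}{N}\pmb{I}$ trivially), hence again a quantum state since the set of density matrices is convex; this is the range of $\gamma$ that is actually needed when uniformly shrinking the points of an arrangement into states in the proof of Theorem~\ref{qcc}. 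Alternatively, one can appeal to Lemma~\ref{kk04lemma} directly: the eigenvalue bound there on the admissible radius depends only on the unit direction $\mathbf{r}/|\mathbf{r}|$ and not on $|\mathbf{r}|$, so if it holds at radius $|\mathbf{r}|$ it holds a fortiori at the smaller radius $\gamma|\mathbf{r}|$. I do not expect a genuine obstacle in this lemma: the whole content is the choice of normalization by $|\mathbf{r}|(N-1)$ in part~(1), which is exactly what pins the Bloch vector to the boundary of the small ball so that Lemma~\ref{balls} applies, after which part~(2) is a routine convexity-and-scaling remark.
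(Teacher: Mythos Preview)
The paper omits the proof of this lemma (it is listed among the ``technical lemmas whose details are omitted''), so there is no line-by-line proof to compare against. That said, the proof outline of Theorem~\ref{qcc} makes the intended argument for part~(1) explicit: one ``simply multipl[ies] ${\mathbf r}_x$ by a fixed constant factor to obtain ${\mathbf r}'_x$ such that ${\mathbf r}'_x$ lies in the small ball in Lemma~\ref{balls}.'' This is exactly your argument --- normalize so that the associated Bloch vector has Euclidean norm $1/(N-1)$ and invoke the inclusion $D_{r_s}\subseteq B(\mathbb{R}^{N^2-1})$ --- so your approach is the one the authors had in mind. Your part~(2) via convexity with the maximally mixed state (equivalently, via the direction-only radius bound of Lemma~\ref{kk04lemma}) is the standard complement and is clearly what is intended.

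One small remark worth keeping: under the interpretation of $\pmb{\rho}(\cdot)$ needed for the application in Theorem~\ref{qcc} (Bloch vector taken literally, without the $1/(|{\mathbf r}|(N-1))$ renormalization), the clause ``for any $\gamma\le 1$'' is in general \emph{false} for $\gamma<0$ when $N>2$, because the Bloch body $B(\mathbb{R}^{N^2-1})$ is not centrally symmetric. Your restriction to $0\le\gamma\le 1$ is therefore not merely convenient but necessary, and, as you correctly observe, it is the only range used anywhere in the paper.
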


\begin{lemma}\label{embed_meas}
For any hyperplane ${\mathbf h} = \left(h_1,\ldots,h_k,h_{k+1}\right)\in\mathbb{R}^{k+1}$, 
let $N$ be any number such that $N^2\geq k+1$, and let $\alpha,\beta$ be two positive 
numbers that are at most $\frac{1}{2\left(|h_{k+1}|+h\sqrt{\frac{2(N-1)}{N}}\right)}$ 
where $h=\sum_{i=1}^k h_i^2$. 
Then, the $N^2$-dimensional vector defined by 
$
{\mathbf h}(\alpha,\beta)=(\beta h_1,\ldots,\beta h_k,0,\ldots,0,1/2-\alpha h_{k+1})
$ is the Bloch vector of a POVM $\{\pmb{E}_0,\pmb{E}_1\}$ over $N$-level quantum states, 
where $\pmb{E}_0$ and $\pmb{E}_1$ are given as 
\begin{equation}\label{eq071-1}
\pmb{E}_0 = \left(\frac{1}{2} - \alpha h_{k+1}\right)\pmb{I} 
+ \beta \sum_{i = 1}^{k} h_i \pmb{\lambda}_i\ \ \mbox{and}
\ \ \pmb{E}_1 = \left(\frac{1}{2} + \alpha h_{k+1}\right)\pmb{I} 
- \beta \sum_{i = 1}^{k} h_i \pmb{\lambda}_i.
\end{equation}
\end{lemma}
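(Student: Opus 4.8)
The plan is to verify directly that $\{\pmb{E}_0,\pmb{E}_1\}$ displayed in (\ref{eq071-1}) is a legitimate POVM over $N$-level quantum states --- i.e.\ that $\pmb{E}_0,\pmb{E}_1\succeq 0$ and $\pmb{E}_0+\pmb{E}_1=\pmb{I}$ --- and then to read off its Bloch vector. Two of the three required facts are immediate. First, $\pmb{E}_0+\pmb{E}_1=\pmb{I}$: the $\pmb{\lambda}_i$-terms cancel and the coefficients of $\pmb{I}$ add to $1$. Second, once $\pmb{E}_0$ is known to be a valid POVM element, its Bloch vector --- in the sense fixed right after Lemma~\ref{neccesary}, namely the tuple of coefficients in the expansion $\pmb{E}=e_{N^2}\pmb{I}+\sum_{i=1}^{N^2-1}e_i\pmb{\lambda}_i$ --- is obtained by inspection of (\ref{eq071-1}): $e_i=\beta h_i$ for $1\le i\le k$, $e_i=0$ for $k<i\le N^2-1$ (the hypothesis $N^2\ge k+1$ is exactly what makes $\pmb{\lambda}_1,\dots,\pmb{\lambda}_k$ available and leaves the higher generators with zero coefficient), and $e_{N^2}=\tfrac12-\alpha h_{k+1}$; this tuple is precisely ${\mathbf h}(\alpha,\beta)$. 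So the entire content of the lemma is the positive semidefiniteness of $\pmb{E}_0$ and $\pmb{E}_1$.

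For that, set $\pmb{A}:=\sum_{i=1}^{k}h_i\pmb{\lambda}_i$, which is Hermitian because the $\pmb{\lambda}_i$ are Hermitian and the $h_i$ are real. Then $\pmb{E}_0=(\tfrac12-\alpha h_{k+1})\pmb{I}+\beta\pmb{A}$ and $\pmb{E}_1=(\tfrac12+\alpha h_{k+1})\pmb{I}-\beta\pmb{A}$, so if $\mu$ runs over the (real) spectrum of $\pmb{A}$ the eigenvalues of $\pmb{E}_0$ are the numbers $\tfrac12-\alpha h_{k+1}+\beta\mu$ and those of $\pmb{E}_1$ are $\tfrac12+\alpha h_{k+1}-\beta\mu$. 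Since $|\mu|\le\|\pmb{A}\|$ (operator norm) and $\beta>0$, both $\pmb{E}_0$ and $\pmb{E}_1$ are positive semidefinite as soon as
\[
\alpha\,|h_{k+1}|+\beta\,\|\pmb{A}\|\ \le\ \tfrac12 .
\]
Granting the spectral estimate $\|\pmb{A}\|\le h\sqrt{2(N-1)/N}$, the hypothesis $\alpha,\beta\le\bigl(2(|h_{k+1}|+h\sqrt{2(N-1)/N})\bigr)^{-1}$ immediately yields $\alpha|h_{k+1}|+\beta\|\pmb{A}\|\le\alpha|h_{k+1}|+\beta h\sqrt{2(N-1)/N}\le\tfrac12$, and the lemma follows.

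So the heart of the matter is the bound $\|\pmb{A}\|\le h\sqrt{2(N-1)/N}$. The cheap estimate from condition (iii) --- $\mathrm{Tr}(\pmb{A}^2)=\sum_{i,j}h_ih_j\,\mathrm{Tr}(\pmb{\lambda}_i\pmb{\lambda}_j)=2h^2$, so $\|\pmb{A}\|\le\|\pmb{A}\|_{\mathrm{HS}}=\sqrt{2}\,h$ --- is \emph{not} sharp enough: it is larger by a factor $\sqrt{N/(N-1)}$ and would only license a strictly smaller range of $\alpha,\beta$, so the tracelessness of $\pmb{A}$ (condition (ii)) must also be used. Concretely, the $N$ eigenvalues $\mu_1,\dots,\mu_N$ of $\pmb{A}$ satisfy $\sum_j\mu_j=0$ and $\sum_j\mu_j^2=2h^2$, whence for each $j$, Cauchy--Schwarz gives $\mu_j^2=\bigl(\sum_{l\ne j}\mu_l\bigr)^2\le(N-1)\sum_{l\ne j}\mu_l^2=(N-1)\bigl(2h^2-\mu_j^2\bigr)$, i.e.\ $\mu_j^2\le\tfrac{2(N-1)}{N}h^2$; taking the maximum over $j$ gives $\|\pmb{A}\|\le h\sqrt{2(N-1)/N}$. (Alternatively the same inequality drops out of Lemma~\ref{balls}: the vector $\tfrac1{(N-1)h}(h_1,\dots,h_k,0,\dots,0)$ lies in the small ball and is therefore the Bloch vector of a state, whose positivity --- read through Lemma~\ref{mstate_vec} --- says exactly that the smallest eigenvalue of $\tfrac1h\pmb{A}$ is $\ge-\sqrt{2(N-1)/N}$; applying this to $\pm\pmb{A}$ controls the whole spectrum.) The only real obstacle in the argument is recognizing that this tight, tracelessness-driven eigenvalue bound --- rather than the one-line Hilbert--Schmidt estimate --- is what the claimed admissible range of $\alpha$ and $\beta$ requires; granted that, the rest is bookkeeping.
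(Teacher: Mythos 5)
Your proof is correct, and since the paper explicitly omits the proofs of these technical lemmas there is nothing to compare it against line by line; your argument is the natural one and is fully consistent with the machinery the paper sets up. Two remarks. First, you have silently (and correctly) read $h$ as the Euclidean norm $\sqrt{\sum_{i=1}^k h_i^2}$ rather than the literal $h=\sum_{i=1}^k h_i^2$ printed in the lemma statement; the literal reading would make the spectral bound $\|\pmb{A}\|\le h\sqrt{2(N-1)/N}$ false whenever the $h_i$ are small, and the usage of $\beta_y$ in the proof of Theorem~\ref{qcc} (which contains $\sqrt{\sum_i (h_i^y)^2}$) confirms the statement has a typo and your reading is the intended one --- it would be worth saying so explicitly. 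Second, your central eigenvalue estimate is exactly the content the paper packages into Lemma~\ref{kk04lemma} and Lemma~\ref{balls}: the tracelessness-plus-Cauchy--Schwarz derivation of $|\mu_j|\le h\sqrt{2(N-1)/N}$ is a self-contained re-proof of the ``small ball'' inclusion, and your parenthetical alternative (pushing $\tfrac{1}{(N-1)h}(h_1,\dots,h_k,0,\dots,0)$ through Lemma~\ref{balls} and reading off positivity for $\pm\pmb{A}$) is almost certainly the route the authors had in mind, given that it reuses their quoted lemmas rather than redoing the spectral analysis. Either way the reduction of the whole lemma to $\alpha|h_{k+1}|+\beta\|\pmb{A}\|\le\tfrac12$, and your observation that the crude Hilbert--Schmidt bound $\sqrt{2}\,h$ is off by exactly the factor $\sqrt{N/(N-1)}$ needed, correctly identifies the only nontrivial point.
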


Now we prove our main theorem in this section.

\begin{proofof}{Theorem \ref{qcc}}
$k_f$ is simply written as $k$ in this proof. 

$(Q(f) \ge \lceil \log(k+1)/2 \rceil)$. 
Let $n=Q(f)$ and $N=2^{n}$. Assume that there is an $n$-qubit protocol for $f$. 
That is, Alice on input $x$ sends an $n$-qubit state $\pmb{\rho}_x$ to Bob with input $y$.  
He then measures $\pmb{\rho}_x$ with a POVM $\{\pmb{E}_y,\pmb{I}-\pmb{E}_y\}$ 
so that $\mbox{sign}(\mbox{Tr}(\pmb{E}_y\pmb{\rho}_x) - 1/2) = \pmb{M}_f(x,y)$. 
From Lemmas \ref{mstate_vec} and \ref{neccesary}  
we can define the points ${\mathbf p}_x=(p_{i}^x) \in \mathbb{R}^{N^2-1}$ 
and hyperplanes ${\mathbf h}_y=(h_{i}^y)\in \mathbb{R}^{N^2}$ 
so that ${\mathbf p}_x$ is the Bloch vector of $\pmb{\rho}_x$, 
and ${\mathbf h}_y =\left(\sqrt{\frac{2(N-1)}{N}}e_{1}^y,
\ldots,\sqrt{\frac{2(N-1)}{N}}
e_{N^2-1}^y,1/2-e_{N^2}^y\right)$ 
where ${\mathbf e}_y=(e_{i}^y)$ is the Bloch vector of the POVM $\{\pmb{E}_y,\pmb{I}-\pmb{E}_y\}$. 
Notice that by Lemma~\ref{trace_bloch}, $\mbox{Tr}(\pmb{E}_y\pmb{\rho}_x)
=e_{N^2}^y+\sqrt{\frac{2(N-1)}{N}}\sum_{i=1}^{N^2-1} p_{i}^x e_{i}^y$, 
which is $> 1/2$ if $\pmb{M}_f(x,y)= 1$ and $< 1/2$ if $\pmb{M}_f(x,y)= -1$ by assumption. 
Thus, we can see that 
\[
\delta({\mathbf p}_x,{\mathbf h}_y)
=\mbox{sign}\left(e_{N^2}^y+\sqrt{\frac{2(N-1)}{N}}\sum_{i=1}^{N^2-1} p_{i}^x e_{i}^y-1/2\right)
=\pmb{M}_f(x,y),
\]
meaning that there exists an arrangement of points and hyperplanes 
in $\mathbb{R}^{N^2-1}$ which realizes $f$. 
Thus, by definition, $k$ is at most $N^2-1=2^{2n}-1$ which implies 
$Q(f)=n \geq\lceil \log(k+1)/2 \rceil$. 

($Q(f) \le \lceil \log(k+1)/2 \rceil$). 
Suppose that there is a $k$-dimensional arrangement of points ${\mathbf p}_x=(p_{i}^x)\in\mathbb{R}^k$ 
and hyperplanes ${\mathbf h}_y=(h_{i}^y)\in\mathbb{R}^{k+1}$ that realizes $\pmb{M}_f$. 
That is, $\delta({\mathbf p}_x,{\mathbf h}_y) = \pmb{M}_f(x,y)$ for every $(x,y)\in X\times Y$. 
 By carefully shrinking-and-shifting this arrangement into Bloch vectors 
 in the small ball, we will show the construction an $n$-qubit protocol for $f$, that is, $n$-qubit states 
 $\pmb{\rho}_x$ for Alice and POVMs $\{\pmb{E}_y,\pmb{I}-\pmb{E}_y\}$ 
 for Bob with the smallest $n$ satisfying $k\leq 2^{2n}-1$, and hence obtain $Q(f)\leq n= \lceil \log(k+1)/2 \rceil$.  

Let $\gamma_x=\mbox{min}\left\{\frac{1}{|{\mathbf p}_x|(2^n-1)},\frac{1}{2^n-1}\right\}$ for each $x\in X$. 
Then, since $(2^n)^2\geq k+1$, Lemma~\ref{embed_states} implies that 
$\frac{1}{2^n}\left(\pmb{I}+\sqrt{ \frac{2^n(2^n-1)}{2} }\sum_{i=1}^{k}\gamma_x p_{i}^x \pmb{\lambda}_i \right)$ 
is an $n$-qubit state, and hence $\gamma_x {\mathbf p}_x$ is the Bloch vector of its qubit state.  
Moreover, Lemma~\ref{embed_meas} implies that by taking 
$\beta_y=\frac{1}{2\left(|h_{k+1}^y|+\sqrt{\sum_{i=1}^{k}(h_{i}^y)^2}\sqrt{\frac{2(2^n-1)}{2^n} }\right)}$, 
${\mathbf h}_y(\beta_y,\beta_y)=(\beta_y h_{1}^y,\ldots,\beta_y h_{k}^y,0,$ $\ldots,0,1/2-\beta_y h_{k+1}^y)$ 
is the Bloch vector of a POVM over $n$-qubit states. 

Now let $\gamma=\frac{1}{\sqrt{2}}\mbox{min}_{x\in X} \gamma_x$, $\beta=\mbox{min}_{y\in Y} \beta_y$, 
and $\alpha=\sqrt{\frac{2(2^n-1)}{2^n}}\gamma\beta$. 
Since $\gamma\leq \gamma_x$ for any $x\in X$ and $0 < \alpha<\beta\leq \beta_y$ for any $y\in Y$, 
Lemmas \ref{embed_states}(2) and \ref{embed_meas} show that $\gamma{\mathbf p}_x$ and 
${\mathbf h}_y(\beta,\alpha)$ are also the Bloch vectors of an $n$-qubit state $\pmb{\rho}_x$ 
and a POVM $\{\pmb{E}_y,\pmb{I}-\pmb{E}_y\}$ over $n$-qubit states, respectively. 
By Lemma~\ref{trace_bloch}, the probability that the measurement value $0$ is obtained is 
\begin{eqnarray*}
\mbox{Tr}(\pmb{E}_y\pmb{\rho}_x) 
&=& \frac{1}{2} - \alpha h_{k+1}^y + \sqrt{\frac{2(2^n-1)}{2^n}}\gamma \beta \sum_{i=1}^k p_{i}^x h_{i}^y
= \frac{1}{2} + \alpha \left(\sum_{i=1}^k p_{i}^x h_{i}^y - h_{k+1}^y \right)
\\
&=& \left\{
\begin{array}{l}
> 1/2 \mbox{~if~} \pmb{M}_f(x,y) = 1\\
< 1/2 \mbox{~if~} \pmb{M}_f(x,y)= -1,
\end{array}
\right.
\end{eqnarray*}
where the last inequality comes from the assumption. Therefore, the 
 states $\pmb{\rho}_x$ and POVMs $\{\pmb{E}_y, \pmb{I}-\pmb{E}_y \}$ can 
 be used to obtain an $n$-qubit protocol for $f$.
\end{proofof}

Combined with the results in \cite{Fors02,FS06}, Theorem \ref{qcc} gives 
us a nontrivial bound for the inner product function $IP_n$ (i.e., 
$IP_n(x,y)=\sum_{i=1}^n x_iy_i$ mod $2$ for any $x=x_1\cdots 
x_n\in\{0,1\}^n$ and $y=y_1\cdots y_n\in\{0,1\}^n$). Note that the 
bounded-error quantum communication complexity is at least $n-O(1)$, and 
$n/2-O(1)$ even if we allow two-way protocol and prior entanglement 
\cite{NS06}. 

\begin{corollary}\label{ip-bound}
$\lceil n/4 \rceil \leq Q(IP_n)
\leq \lceil ((\log 3) n+2)/4 \rceil$. 
\end{corollary}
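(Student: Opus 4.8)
The plan is to read both bounds off Theorem~\ref{qcc}, which gives $Q(IP_n)=\lceil\log(k_{IP_n}+1)/2\rceil$, so that everything reduces to sandwiching the arrangement dimension $k_{IP_n}$ of the communication matrix, which is exactly the $2^n\times 2^n$ Sylvester--Hadamard matrix $\pmb{H}_{2^n}$ (entry $(-1)^{\langle x,y\rangle}$). The convenient intermediary is the minimal \emph{sign-representation} dimension $d_{\pm}(\pmb{M})$, i.e.\ the least $d$ admitting vectors $\pmb{u}_x,\pmb{v}_y\in\mathbb{R}^d$ with $\mbox{sign}\langle\pmb{u}_x,\pmb{v}_y\rangle=\pmb{M}(x,y)$. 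I would first record the standard comparison $d_{\pm}(\pmb{M})-1\le k_{\pmb{M}}\le d_{\pm}(\pmb{M})$: appending a coordinate $1$ to each $\pmb{p}_x$ and replacing $\pmb{h}_y$ by $(h^y_1,\dots,h^y_k,-h^y_{k+1})$ turns a dimension-$k$ arrangement into a sign-representation of dimension $k+1$, while taking offset $h^y_{k+1}=0$ turns a sign-representation of dimension $d$ into an arrangement of dimension $d$. Hence it suffices to sandwich $d_{\pm}(\pmb{M}_{IP_n})$.

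For the lower bound I would invoke Forster's theorem~\cite{Fors02}: $d_{\pm}(\pmb{M})\ge\sqrt{|X|\,|Y|}/\|\pmb{M}\|$, where $\|\cdot\|$ is the spectral norm. Since $\pmb{H}_{2^n}$ is $2^n\times2^n$ with $\pmb{H}_{2^n}\pmb{H}_{2^n}^{T}=2^n\pmb{I}$, all its singular values equal $2^{n/2}$, so $d_{\pm}(\pmb{M}_{IP_n})\ge 2^{n/2}$. Therefore $k_{IP_n}+1\ge d_{\pm}(\pmb{M}_{IP_n})\ge 2^{n/2}$, i.e.\ $\log(k_{IP_n}+1)\ge n/2$, and by Theorem~\ref{qcc}, $Q(IP_n)=\lceil\log(k_{IP_n}+1)/2\rceil\ge\lceil n/4\rceil$.

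For the upper bound I would settle two base cases by hand. The matrix $\pmb{M}_{IP_1}$ (rows $(1,1)$ and $(1,-1)$) is realized in $\mathbb{R}^1$ by two points and two thresholds, so $k_{IP_1}\le1$ and $d_{\pm}(\pmb{M}_{IP_1})\le2$. For $\pmb{M}_{IP_2}$ I would exhibit a planar arrangement: place $\pmb{p}_{00},\pmb{p}_{01},\pmb{p}_{10}$ at the vertices of a triangle and $\pmb{p}_{11}$ at its centroid. Besides one line with all four points on its positive side, the three remaining columns of $\pmb{M}_{IP_2}$ demand exactly the three $2$-$2$ separations $\{\pmb{p}_{00},\pmb{p}_{10}\}\mid\{\pmb{p}_{01},\pmb{p}_{11}\}$, $\{\pmb{p}_{00},\pmb{p}_{01}\}\mid\{\pmb{p}_{10},\pmb{p}_{11}\}$, $\{\pmb{p}_{00},\pmb{p}_{11}\}\mid\{\pmb{p}_{01},\pmb{p}_{10}\}$; each is realized by a line parallel to the corresponding edge placed strictly between that edge and the centroid (which lies one third of the way from each edge to the opposite vertex), with a generic choice ensuring no point lies on a line. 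Hence $k_{IP_2}\le2$ and $d_{\pm}(\pmb{M}_{IP_2})\le3$. Since $\pmb{M}_{IP_n}=\pmb{M}_{IP_2}^{\otimes\lfloor n/2\rfloor}\otimes\pmb{M}_{IP_1}^{\otimes(n\bmod 2)}$ and sign-rank is sub-multiplicative under Kronecker products (because $\mbox{sign}\langle\pmb{u}\otimes\pmb{u}',\pmb{v}\otimes\pmb{v}'\rangle=\mbox{sign}\langle\pmb{u},\pmb{v}\rangle\cdot\mbox{sign}\langle\pmb{u}',\pmb{v}'\rangle$), I obtain $d_{\pm}(\pmb{M}_{IP_n})\le 3^{\lfloor n/2\rfloor}2^{n\bmod 2}$, hence $k_{IP_n}+1\le 3^{\lfloor n/2\rfloor}2^{n\bmod 2}+1\le 2\cdot 3^{n/2}$ (checking the two parities: $3^{n/2}+1\le 2\cdot3^{n/2}$ for even $n$; $2\cdot3^{m}+1\le 2\sqrt{3}\cdot3^{m}$ for $n=2m+1$). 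Plugging into Theorem~\ref{qcc} gives $Q(IP_n)=\lceil\log(k_{IP_n}+1)/2\rceil\le\lceil(1+\tfrac{n}{2}\log 3)/2\rceil=\lceil((\log 3)n+2)/4\rceil$.

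The only genuinely non-routine step will be the explicit $2$-dimensional arrangement for $\pmb{M}_{IP_2}$ (equivalently $d_{\pm}(\pmb{H}_4)\le3$): one must verify that all sixteen point/line sidedness conditions, together with the orientation chosen for each of the four lines, reproduce the four columns of $\pmb{M}_{IP_2}$. The point of the triangle-plus-centroid configuration is precisely that it makes the three ``diagonal'' $2$-$2$ splits simultaneously linearly separable, which is impossible for four points in convex position. Everything else --- Forster's inequality, the $d_{\pm}$-versus-$k$ comparison, sub-multiplicativity, and the final arithmetic landing exactly on $((\log 3)n+2)/4$ --- is bookkeeping; alternatively one may simply quote the matching estimates $2^{n/2}\le k_{IP_n}+1\le 2\cdot 3^{n/2}$ from \cite{Fors02,FS06} and invoke Theorem~\ref{qcc}.
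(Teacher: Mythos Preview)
Your proposal is correct and follows essentially the same approach as the paper: the paper simply states that the corollary follows by combining Theorem~\ref{qcc} with the bounds on $k_{IP_n}$ from \cite{Fors02,FS06}, and you have reconstructed those bounds in detail (Forster's spectral lower bound, and the tensor-power upper bound via an explicit $2$-dimensional arrangement for $\pmb{M}_{IP_2}$). Your final sentence already identifies this shortcut, which is exactly what the paper does.
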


\section{Classical Tight Bound} 
Paturi and Simon \cite{PS86} shows that for every function $f:X\times Y\rightarrow\{0,1\}$, 
$\lceil\log k_f\rceil\leq C(f)\leq \lceil\log k_f\rceil+1$. We remove 
this small gap as follows.
 
\begin{theorem}\label{ccc}
$C(f)= \lceil \log(k_f+1) \rceil$ for every function $f:X\times Y\rightarrow\{0,1\}$.
\end{theorem}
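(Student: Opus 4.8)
The plan is to run, for the classical setting, the same two-sided argument used for Theorem~\ref{qcc}, with the probability simplex playing the role of the small Bloch ball of Lemma~\ref{balls} and the cube $[-1,1]^{2^{c}}$ playing the role of the POVM constraints. First I would record the elementary dictionary: a one‑way protocol of cost $c$ is exactly a probability vector $\mathbf{a}_x\in\mathbb{R}^{2^{c}}$ for each $x$ (Alice's distribution over the $2^{c}$ messages) together with a vector $\pmb{\beta}_y\in[-1,1]^{2^{c}}$ for each $y$ (Bob's bias $\mathrm{E}[(-1)^{\mathrm{output}}\mid\text{message},y]$ on each message), and the success probability exceeds $1/2$ for all $(x,y)$ precisely when $\mathrm{sign}\big(\langle\mathbf{a}_x,\pmb{\beta}_y\rangle\big)=\pmb{M}_f(x,y)$ (in particular the pairing is never $0$). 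Thus $C(f)\le c$ iff $\pmb{M}_f$ admits such a representation by vectors in $\mathbb{R}^{2^{c}}$, which is a ``point/hyperplane through the origin'' arrangement with extra box constraints.

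For the lower bound $C(f)\ge\lceil\log(k_f+1)\rceil$, take any $c$‑bit protocol. The points $\mathbf{a}_x$ all lie in the standard simplex on $2^{c}$ outcomes, whose affine hull $\{\sum_i z_i=1\}$ has dimension $2^{c}-1$. Fixing an affine isomorphism of this hull with $\mathbb{R}^{2^{c}-1}$ turns the $\mathbf{a}_x$ into points $\mathbf{p}_x\in\mathbb{R}^{2^{c}-1}$ and turns each functional $z\mapsto\langle z,\pmb{\beta}_y\rangle$ into an affine functional on $\mathbb{R}^{2^{c}-1}$, i.e.\ a hyperplane $\mathbf{h}_y$ in the sense of Section~\ref{sec:preliminaries}; the degenerate case of a constant functional corresponds to a constant row of $\pmb{M}_f$ and is absorbed by a harmless perturbation (push the hyperplane far away). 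This is a $(2^{c}-1)$‑dimensional arrangement realizing $f$, so $k_f\le 2^{c}-1$, hence $c\ge\log(k_f+1)$ and, as $c$ is an integer, $c\ge\lceil\log(k_f+1)\rceil$; taking the optimal protocol gives the claim. Note this is already sharper than the $\lceil\log k_f\rceil$ bound of \cite{PS86} exactly when $k_f$ is a power of $2$.

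For the upper bound $C(f)\le\lceil\log(k_f+1)\rceil$, set $c=\lceil\log(k_f+1)\rceil$ (so $2^{c}\ge k_f+1$) and start from a $k_f$‑dimensional arrangement $(\mathbf{p}_x,\mathbf{h}_y)$ realizing $f$. Homogenize by appending a $1$ to each $\mathbf{p}_x$ and negating the last coordinate of each $\mathbf{h}_y$, getting vectors $\widetilde{\mathbf{p}}_x,\widetilde{\mathbf{h}}_y\in\mathbb{R}^{k_f+1}$ with $\mathrm{sign}\big(\langle\widetilde{\mathbf{p}}_x,\widetilde{\mathbf{h}}_y\rangle\big)=\pmb{M}_f(x,y)$. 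Since $k_f\le 2^{c}-1$, the affine hyperplane $\{z_{k_f+1}=1\}\subset\mathbb{R}^{k_f+1}$ embeds affinely into the affine hull of the $2^{c}$‑outcome simplex; shrinking this embedding towards an interior point $w^{*}$ of the simplex (the exact analogue of the shrink‑and‑shift of Lemma~\ref{embed_states}) places every $\widetilde{\mathbf{p}}_x$ strictly inside, producing Alice's $\mathbf{a}_x$. For Bob I would solve a linear system over $\mathbb{R}^{2^{c}}$ for $\pmb{\beta}_y$ so that $\langle\mathbf{a}_x,\pmb{\beta}_y\rangle$ is a positive multiple of $\langle\widetilde{\mathbf{p}}_x,\widetilde{\mathbf{h}}_y\rangle$; this system is solvable because the relevant linear map $\mathbb{R}^{2^{c}}\to\mathbb{R}^{k_f+1}$ is onto (its adjoint is injective, using $\sum_i w^{*}_i=1$ so that the coefficient of $w^{*}$ must vanish), and a final rescaling of each $\pmb{\beta}_y$ brings it into $[-1,1]^{2^{c}}$ without changing signs, mirroring Lemma~\ref{embed_meas}. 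Combining the two inequalities gives $C(f)=\lceil\log(k_f+1)\rceil$.

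The step I expect to need the most care is the exact dimension bookkeeping in both directions: the whole improvement over \cite{PS86} rests on the simple fact that the probability simplex on $2^{c}$ outcomes spans an affine subspace of dimension $2^{c}-1$ rather than $2^{c}$, and one must make sure this ``$-1$'' (equivalently the ``$+1$'' inside $\log(k_f+1)$) is neither lost nor double‑counted — in particular that the degenerate constant‑row cases in the lower bound and the solvability and rescaling of Bob's linear system in the upper bound are handled without perturbing the count.
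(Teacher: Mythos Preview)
Your proposal is correct and follows essentially the same approach as the paper: both directions exploit that the probability simplex on $2^{c}$ outcomes has affine dimension $2^{c}-1$, which is precisely the source of the ``$+1$'' inside $\log(k_f+1)$. The paper differs only in presentation, giving explicit coordinate formulas (drop the last coordinate of $\mathbf{p}_x$ and subtract $h_N^y$ for the lower bound; shift by $s=\max_{x,i}|p_i^x|$, append a coordinate, and normalize for the upper bound) rather than your abstract affine-isomorphism and surjectivity-of-the-adjoint argument; in particular your ``degenerate constant functional'' worry does not arise in the explicit construction, since it is absorbed automatically into the constant term $1/2-h_N^y$.
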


\begin{proof}
Let $k=k_f$ in this proof. 

($C(f) \ge \lceil\log{(k+1)}\rceil$). Let $N=2^{C(f)}$.  
Suppose that there is a $C(f)$-bit protocol for $f$. Paturi and Simon (in Theorem~2 in \cite{PS86}) 
gave an $N$-dimensional arrangement of points ${\mathbf p}_x=(p_{i}^x)\in\mathbb{R}^N$ and hyperplanes 
${\mathbf h}_y=(h_{1}^y,\ldots,h_{N}^y,1/2)\in \mathbb{R}^{N+1}$, that is, 
$\delta({\mathbf p}_x,{\mathbf h}_y) = \pmb{M}_f(x,y)$ for every $(x,y)\in X\times Y$. 
Noting that the points ${\mathbf p}_x$ are probabilistic vectors 
 satisfying $\sum_{i=1}^{N} p_i=1$, we can reduce the dimension of the arrangement to $N-1$. 
We define ${\mathbf q}_x=(q_{i}^x)\in\mathbb{R}^{N-1}$ and ${\mathbf l}_y=(l_{i}^y)\in\mathbb{R}^{N}$ 
from ${\mathbf p}_x$ and ${\mathbf h}_y$, respectively, as follows: 
${\mathbf q}_x= \left(p_{1}^x,p_{2}^x,\ldots,p_{N-1}^x \right)$ and 
${\mathbf l}_y= \left(h_{1}^y-h_{N}^y, h_{2}^y-h_{N}^y, \ldots, h_{N-1}^y-h_{N}^y, \frac{1}{2}- h_{N}^y \right)$. 
From the assumption and $p_{N}^x = 1 - \sum_{i}^{N-1} p_{i}^x$, 
\begin{align*}
\sum_{i=1}^{N-1} q_{i}^x l_{i}^y - l_{N}^y
&= \sum_{i=1}^{N-1} p_{i}^x (h_{i}^y -h_{N}^y) -\frac{1}{2}+h_{N}^y
= \sum_{i=1}^{N-1} p_{i}^x h_{i}^y -\frac{1}{2}+h_{N}^y-\sum_{i=1}^{N-1}p_{i}^x h_{N}^y \\
&= \sum_{i=1}^{N} p_{i}^x h_{i}^y -\frac{1}{2}
=\left\{
\begin{array}{ll}
> 0\ \mbox{if}\ \pmb{M}_f(x,y)=1\\
< 0\ \mbox{if}\ \pmb{M}_f(x,y)=-1.
\end{array}
\right.
\end{align*}
Thus, $\delta({\mathbf q}_x,{\mathbf l}_y)=\pmb{M}_f(x,y)$ for every $(x,y)\in X\times Y$. 
That is, $M_f$ is realizable by the $(N-1)$-dimensional arrangement of points ${\mathbf q}_x$ 
and hyperplanes ${\mathbf l}_y$. By definition, $k\leq N-1=2^{C(f)} - 1$, which means that $C(f)
\geq \lceil\log{(k+1)}\rceil$.

($C(f) \le \left\lceil\log{(k+1)}\right\rceil$). The proof is also based on that of Theorem~2 of Paturi and Simon \cite{PS86}. 
They showed the existence of a protocol where Alice (with input $x$) 
sends a probabilistic mixture of (at most) $k+2$ different messages to Bob (with input $y$). 
In this proof we reduce the number of messages to $k+1$. That is, we construct the following protocol using $k+1$ 
different messages: Alice sends a message $S_j$ with probability $q_{j}^x$ where $j \in [k+1]$, 
and Bob outputs $0$ with probability $l_{j}^y$ upon receiving $S_j$. Here, $[n]:=\{1,2,\ldots,n\}$ 
for any $n\in\mathbb{N}$. We will show that the probability of Bob outputs $0$, 
represented as $\sum_{j=1}^{k+1}q_{j}^x l_{j}^y$, is $>1/2$ if $\pmb{M}_f(x,y)=1$ and $<1/2$ if $\pmb{M}_f(x,y)=-1$. 

Assume that there exists a $k$-dimensional arrangement of points ${\mathbf p}_x=(p_{i}^x)\in\mathbb{R}^k$ 
and hyperplanes ${\mathbf h}_y=(h_{i}^y)\in \mathbb{R}^{k+1}$ that realizes $\pmb{M}_f$, 
that is, $\delta({\mathbf p}_x,{\mathbf h}_y) = \pmb{M}_f(x,y)$ for every $(x,y)\in X\times Y$. 
Let $s =\mbox{max}_{x\in X}\mbox{max}_{i\in[k]} |p_{i}^x|$, $\alpha_x = 1 + \sum_{i=1}^k (s + p_{i}^x)$ for each $x \in X$, 
and $\beta_y = \mbox{max}(|h_{1}^y|,\ldots,|h_{k}^y|, |h_{k+1}^y +s\sum_{i=1}^k h_{i}^y|)$ for each $y \in Y$. 
Then, we define ${\mathbf q}_x=(q_{i}^x)\in\mathbb{R}^{k+1}$ and ${\mathbf l}_y=(l_{i}^y)\in\mathbb{R}^{k+1}$ by 
${\mathbf q}_x=\left(\frac{s+p_{1}^x}{\alpha_x},\frac{s+p_{2}^x}{\alpha_x},
\ldots,\frac{s+p_{k}^x}{\alpha_x},\frac{1}{\alpha_x}\right)$ and ${\mathbf l}_y 
= 
\left(\frac{1}{2} + \frac{h_{1}^y}{2\beta_y},
\frac{1}{2} + \frac{h_{k}^y}{2\beta_y}, \frac{1}{2} - \frac{h_{k+1}^y + s\sum_{i=1}^k h_{i}^y}{2\beta_y} \right)$. 
It can be easily checked that $0 \le q_{i}^x \le 1$ for all $(x,i)\in X\times[k]$, 
$\sum_{i=1}^{k+1} q_{i}^x = 1$, and $0 \le l_{i}^y \le 1$ for all $(y,i)\in Y\times[k+1]$. 
Moreover, 
\begin{eqnarray*} 
\sum_{i = 1}^{k+1} q_{i}^x l_{i}^y
&=& \sum_{i=1}^{k}\left(\frac{s+p_{i}^x}{\alpha_x}\right)\left(\frac{1}{2}+\frac{h_{i}^y}{2\beta_y}\right) + 
\frac{1}{\alpha_x}\left(\frac{1}{2} - \frac{h_{k+1}^y + s\sum_{i=1}^k h_{i}^y}{2\beta_y} \right)\\
&=& \frac{1}{2} + \frac{1}{2\alpha_x\beta_y}\left(\sum_{i=1}^k h_{i}^y p_{i}^x - h_{k+1}^y \right)
= 
\left\{
\begin{array}{l}
> 1/2 \mbox{~if~} \pmb{M}_f(x,y) = 1\\
< 1/2 \mbox{~if~} \pmb{M}_f(x,y) = -1.
\end{array}
\right.
\end{eqnarray*}
Hence, given a $k$-dimensional arrangement of points and hyperplanes 
 realizing  $\pmb{M}_f$, we can construct a protocol using at most $k+1$ 
 different messages for $f$. This means that $C(f) \le 
 \left\lceil\log{(k+1)}\right\rceil$. This completes the proof.
\end{proof}

Now we obtain our main result in this paper. 

\begin{theorem}\label{main-theorem}
For every function $f:X\times Y\rightarrow \{0,1\}$, $Q(f)=\lceil C(f)/2 \rceil$.
\end{theorem}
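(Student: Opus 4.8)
The plan is to combine the two exact characterizations obtained above. By Theorem~\ref{qcc} we have $Q(f)=\lceil\log(k_f+1)/2\rceil$, and by Theorem~\ref{ccc} we have $C(f)=\lceil\log(k_f+1)\rceil$. Hence the claim $Q(f)=\lceil C(f)/2\rceil$ reduces to the purely arithmetical identity
\[
\left\lceil\frac{\log(k_f+1)}{2}\right\rceil=\left\lceil\frac{\lceil\log(k_f+1)\rceil}{2}\right\rceil .
\]

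First I would isolate the general fact behind this: for every real $x$ and every positive integer $m$, $\lceil x/m\rceil=\lceil\lceil x\rceil/m\rceil$. A one-line argument: put $n=\lceil x/m\rceil$, so that $m(n-1)<x\le mn$; since $m(n-1)$ and $mn$ are integers, the left inequality forces $m(n-1)<\lceil x\rceil$ and the right one forces $\lceil x\rceil\le mn$ (the least integer $\ge x$ cannot exceed the integer $mn\ge x$), whence $n-1<\lceil x\rceil/m\le n$, i.e.\ $\lceil\lceil x\rceil/m\rceil=n$. Applying this with $m=2$ and $x=\log(k_f+1)$ gives the displayed identity at once, and therefore
\[
Q(f)=\left\lceil\frac{\log(k_f+1)}{2}\right\rceil=\left\lceil\frac{\lceil\log(k_f+1)\rceil}{2}\right\rceil=\left\lceil\frac{C(f)}{2}\right\rceil .
\]

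Since Theorems~\ref{qcc} and~\ref{ccc} are already established, there is essentially no remaining obstacle; the only point requiring a line of care is the elementary ceiling identity above, and one may add a remark that the degenerate cases (for instance $f$ constant, where $k_f$ equals $0$ or $1$) need no special treatment because the identity and both theorems apply uniformly. An equivalent way to phrase the conclusion, if one prefers, is $C(f)\in\{2Q(f)-1,\,2Q(f)\}$, which follows the same way.
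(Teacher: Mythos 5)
Your proposal is correct and matches the paper's (implicit) argument exactly: the paper derives Theorem~\ref{main-theorem} directly by combining Theorems~\ref{qcc} and~\ref{ccc}, which is precisely your reduction to the ceiling identity $\lceil x/m\rceil=\lceil\lceil x\rceil/m\rceil$. Your explicit verification of that identity is a welcome extra line of care, but the route is the same.
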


\section{Applications to Random Access Coding}

In this section we discuss the random access coding as an application of our characterizations 
of $Q(f)$ and $C(f)$. The concept of quantum random access coding (QRAC) and the classical random access coding 
(RAC) were introduced by Ambainis et al.\ \cite{ANTV99}. 
The $(m,n,p)$-QRAC (resp.\ $(m,n,p)$-RAC) is an encoding of $m$ bits 
using $n$ qubits (resp. $n$ bits) so that {\em any} one of the $m$ bits 
can be obtained with probability at least $p$. In fact, the function 
computed by the RAC (or QRAC) is known before as the {\it index} 
function in the context of communication complexity. It is denoted as 
$\mbox{{\it INDEX}}_n(x,i)=x_i$ for any 
$x\in\{0,1\}^n$ and $i\in [n]$ (see \cite{KN97}). 

\subsection{Existence of QRAC and RAC}
First we use Theorems \ref{qcc} and \ref{ccc} to show the existence of RAC and QRAC. 
As seen in \cite{PS86}, the smallest dimension of arrangements realizing 
$\mbox{\it {INDEX}}_n$ is $n$. Thus, Theorem \ref{qcc} gives us the 
following corollary for its unbounded-error one-way quantum communication 
complexity. 

\begin{corollary}
$Q(\mbox{\it {INDEX}}_n)=\lceil\log(n+1)/2\rceil$.
\end{corollary}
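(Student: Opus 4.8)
The plan is to derive the corollary directly from Theorem~\ref{qcc} by pinning down a single quantity: $k_{\mbox{\it {INDEX}}_n}$, the smallest dimension of an arrangement realizing the communication matrix $\pmb{M}_{\mbox{\it {INDEX}}_n}=\bigl((-1)^{x_i}\bigr)_{x\in\{0,1\}^n,\,i\in[n]}$. Once we show $k_{\mbox{\it {INDEX}}_n}=n$, Theorem~\ref{qcc} immediately yields $Q(\mbox{\it {INDEX}}_n)=\lceil\log(k_{\mbox{\it {INDEX}}_n}+1)/2\rceil=\lceil\log(n+1)/2\rceil$. So the whole argument reduces to the two inequalities $k_{\mbox{\it {INDEX}}_n}\le n$ and $k_{\mbox{\it {INDEX}}_n}\ge n$.

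For the upper bound I would exhibit an explicit $n$-dimensional arrangement: for each $x\in\{0,1\}^n$ put $\mathbf{p}_x=((-1)^{x_1},\ldots,(-1)^{x_n})\in\mathbb{R}^n$, and for each $i\in[n]$ take the coordinate hyperplane $\mathbf{h}_i=(0,\ldots,0,1,0,\ldots,0,0)$ with the single $1$ in position $i$ and $h_{n+1}^i=0$. Then $\delta(\mathbf{p}_x,\mathbf{h}_i)=\mathrm{sign}(p_i^x)=(-1)^{x_i}=\pmb{M}_{\mbox{\it {INDEX}}_n}(x,i)$, so this arrangement realizes $\mbox{\it {INDEX}}_n$ and hence $k_{\mbox{\it {INDEX}}_n}\le n$. (This is essentially the arrangement already used in \cite{PS86}.)

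The substantive part is the matching lower bound $k_{\mbox{\it {INDEX}}_n}\ge n$, and this is where I expect the only real work. Suppose some arrangement of points $\mathbf{p}_x$ and hyperplanes $\mathbf{h}_i$ in $\mathbb{R}^k$ realizes $\pmb{M}_{\mbox{\it {INDEX}}_n}$. For each $x$ the sign vector $(\delta(\mathbf{p}_x,\mathbf{h}_i))_{i=1}^n$ equals $((-1)^{x_i})_{i=1}^n$, which has no zero entry; hence, as $x$ ranges over $\{0,1\}^n$, these vectors run through all $2^n$ elements of $\{+1,-1\}^n$, and distinct sign vectors force the corresponding points to lie in distinct full-dimensional cells of the arrangement of the $n$ hyperplanes $\{\mathbf{h}_i\}_{i=1}^n$. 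But $n$ hyperplanes in $\mathbb{R}^k$ subdivide $\mathbb{R}^k$ into at most $\sum_{j=0}^{k}\binom{n}{j}$ cells, so $2^n\le\sum_{j=0}^{k}\binom{n}{j}$; since $\sum_{j=0}^{k}\binom{n}{j}<2^n$ for every $k<n$, this gives $k\ge n$. The only obstacle here is to invoke the maximal-cell-count bound for hyperplane arrangements correctly and to note that all $2^n$ patterns are genuinely attained; alternatively one can simply cite \cite{PS86}, where $k_{\mbox{\it {INDEX}}_n}=n$ is established.

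Combining the two bounds gives $k_{\mbox{\it {INDEX}}_n}=n$, and substituting into Theorem~\ref{qcc} completes the proof.
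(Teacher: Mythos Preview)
Your proposal is correct and follows exactly the paper's route: apply Theorem~\ref{qcc} together with the fact $k_{\mbox{\it INDEX}_n}=n$. The paper simply cites \cite{PS86} for this equality, whereas you additionally supply a self-contained proof via the standard cell-count bound $\sum_{j=0}^{k}\binom{n}{j}$ for hyperplane arrangements; the logical structure is otherwise identical.
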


Similarly, Theorem \ref{ccc} gives its classical counterpart, which is tighter than \cite{PS86}. 

\begin{corollary}
$C(\mbox{{\it INDEX}}_n)=\lceil\log(n+1)\rceil$. 
\end{corollary}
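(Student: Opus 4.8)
The plan is to combine Theorem~\ref{ccc} with the geometric identity $k_{\mbox{{\it INDEX}}_n}=n$; once that is in hand, $C(\mbox{{\it INDEX}}_n)=\lceil\log(k_{\mbox{{\it INDEX}}_n}+1)\rceil=\lceil\log(n+1)\rceil$ follows at once. Recall that under the convention $\pmb{M}_f=((-1)^{f(x,y)})$ the communication matrix here is $\pmb{M}_{\mbox{{\it INDEX}}_n}(x,i)=(-1)^{x_i}$ for $x\in\{0,1\}^n$ and $i\in[n]$, so as $x$ ranges over $\{0,1\}^n$ the rows of $\pmb{M}_{\mbox{{\it INDEX}}_n}$ run over all $2^n$ sign vectors of $\{-1,+1\}^n$, and these rows are pairwise distinct.

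For the upper bound $k_{\mbox{{\it INDEX}}_n}\le n$ I would write down an explicit $n$-dimensional arrangement: for each $x$ take the point $\mathbf{p}_x=((-1)^{x_1},\ldots,(-1)^{x_n})\in\mathbb{R}^n$, and for each $i\in[n]$ take the hyperplane $\mathbf{h}_i=(\mathbf{e}_i,0)\in\mathbb{R}^{n+1}$, i.e.\ the coordinate hyperplane $\{a_i=0\}$, where $\mathbf{e}_i$ is the $i$-th standard basis vector. Then $\delta(\mathbf{p}_x,\mathbf{h}_i)=\mbox{sign}(\sum_{j=1}^n p_j^x(\mathbf{e}_i)_j-0)=\mbox{sign}((-1)^{x_i})=(-1)^{x_i}=\pmb{M}_{\mbox{{\it INDEX}}_n}(x,i)$, so this arrangement of dimension $n$ realizes $\mbox{{\it INDEX}}_n$.

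For the lower bound $k_{\mbox{{\it INDEX}}_n}\ge n$ I would argue by a cell-counting contradiction. Suppose some $(n-1)$-dimensional arrangement of points $\{\mathbf{p}_x\}$ and hyperplanes $\mathbf{h}_1,\ldots,\mathbf{h}_n$ realizes $\mbox{{\it INDEX}}_n$. Because $\pmb{M}_{\mbox{{\it INDEX}}_n}$ is $\{-1,+1\}$-valued we have $\delta(\mathbf{p}_x,\mathbf{h}_i)\ne0$ for all $x$ and $i$, so each $\mathbf{p}_x$ lies strictly on one side of every $\mathbf{h}_i$, hence in an open full-dimensional cell of the arrangement whose sign vector $(\delta(\mathbf{p}_x,\mathbf{h}_1),\ldots,\delta(\mathbf{p}_x,\mathbf{h}_n))$ is exactly row $x$ of $\pmb{M}_{\mbox{{\it INDEX}}_n}$. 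Since these $2^n$ rows are distinct, the arrangement must contain at least $2^n$ distinct full-dimensional cells. But $n$ hyperplanes in $\mathbb{R}^{n-1}$ cut it into at most $\sum_{j=0}^{n-1}\binom{n}{j}=2^n-1$ such cells (the standard arrangement bound, and at most this many even in degenerate position), a contradiction. Therefore $k_{\mbox{{\it INDEX}}_n}=n$, and Theorem~\ref{ccc} yields $C(\mbox{{\it INDEX}}_n)=\lceil\log(n+1)\rceil$.

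The sign computation in the upper bound and the classical region-count inequality are routine; the only step needing care is the lower bound's reduction from ``realizes $\pmb{M}_{\mbox{{\it INDEX}}_n}$'' to ``has at least $2^n$ open cells'', namely the observation that all $\delta$-values are nonzero, so that each $\mathbf{p}_x$ genuinely sits in an open cell labelled by a full sign vector — this is precisely where the hypothesis that the matrix has no zero entries is used.
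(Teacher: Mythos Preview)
Your argument is correct and follows the same route as the paper: apply Theorem~\ref{ccc} together with $k_{\mbox{{\it INDEX}}_n}=n$. The only difference is that the paper simply cites \cite{PS86} for the value $k_{\mbox{{\it INDEX}}_n}=n$, whereas you supply a self-contained proof of this fact via an explicit coordinate-hyperplane arrangement for the upper bound and the region-count inequality $\sum_{j=0}^{n-1}\binom{n}{j}=2^n-1$ for the lower bound; both steps are standard and valid.
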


Since random access coding is the same as $\mbox{{\it INDEX}}_n$ as 
Boolean functions, the following tight results are obtained for the existence of random access coding. 

\begin{corollary}\label{qrac-cor}
$(2^{2n}-1,n,>1/2)$-QRAC exists, but $(2^{2n},n,>1/2)$-QRAC does not 
 exist. Moreover, $(2^n-1,n,>1/2)$-RAC exists, but $(2^n,n,>1/2)$-RAC 
 does not exist.   
\end{corollary}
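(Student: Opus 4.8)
The plan is to recognize that an $(m,n,>1/2)$-QRAC is precisely an unbounded-error one-way \emph{quantum} protocol of cost $n$ for the index function $\mathrm{INDEX}_m$, and that an $(m,n,>1/2)$-RAC is the corresponding \emph{classical} object. In a QRAC, Alice's $n$-qubit encoding of $x\in\{0,1\}^m$ depends only on $x$, which is exactly her message in the one-way model (she never sees Bob's index $i$), and Bob's procedure for recovering $x_i$ is a two-outcome measurement indexed by $i$; the requirement that every bit be recovered with probability strictly greater than $1/2$ is exactly the unbounded-error success condition for $\mathrm{INDEX}_m(x,i)=x_i$. (One padding remark: since communication cost is the maximum over inputs, a protocol of cost $n$ may be assumed to send exactly $n$ qubits, matching the QRAC format.) Consequently, $(m,n,>1/2)$-QRAC exists if and only if $Q(\mathrm{INDEX}_m)\le n$, and $(m,n,>1/2)$-RAC exists if and only if $C(\mathrm{INDEX}_m)\le n$.

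Next I would substitute the value $k_{\mathrm{INDEX}_m}=m$ — the fact, recalled from \cite{PS86}, that the smallest dimension of an arrangement realizing $\mathrm{INDEX}_m$ is $m$ — into Theorems~\ref{qcc} and~\ref{ccc} (equivalently, invoke the two corollaries on $\mathrm{INDEX}_n$ already established). This gives $Q(\mathrm{INDEX}_m)=\lceil\log(m+1)/2\rceil$ and $C(\mathrm{INDEX}_m)=\lceil\log(m+1)\rceil$.

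The remaining step is elementary arithmetic with ceilings. In the quantum case, because $n$ is a nonnegative integer, $\lceil\log(m+1)/2\rceil\le n$ holds iff $\log(m+1)/2\le n$, i.e. iff $m+1\le 2^{2n}$, i.e. iff $m\le 2^{2n}-1$. Taking $m=2^{2n}-1$ yields $Q(\mathrm{INDEX}_{2^{2n}-1})=n$, so the QRAC exists; taking $m=2^{2n}$ gives $\log(2^{2n}+1)/2\in(n,n+\tfrac12)$, hence $Q(\mathrm{INDEX}_{2^{2n}})=n+1>n$, so no such QRAC exists. The classical case is identical: $\lceil\log(m+1)\rceil\le n$ iff $m\le 2^n-1$, so $(2^n-1,n,>1/2)$-RAC exists (cost $n$) while for $m=2^n$ one gets $\lceil\log(2^n+1)\rceil=n+1>n$.

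There is essentially no hard step here; all the content lives in Theorems~\ref{qcc} and~\ref{ccc}. The only point deserving a word of care is the claim $k_{\mathrm{INDEX}_m}=m$: the upper bound is immediate (send $x$ to the point $((-1)^{x_1},\dots,(-1)^{x_m})$ and $i$ to the $i$-th coordinate hyperplane through the origin, so that $\delta$ returns $(-1)^{x_i}$), whereas the matching lower bound $k_{\mathrm{INDEX}_m}\ge m$ is exactly the argument of \cite{PS86}, which I would cite rather than reprove.
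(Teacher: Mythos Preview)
Your proposal is correct and follows essentially the same approach as the paper: identify $(m,n,>1/2)$-QRAC (resp.\ RAC) with an unbounded-error one-way quantum (resp.\ classical) protocol of cost $n$ for $\mathrm{INDEX}_m$, invoke $k_{\mathrm{INDEX}_m}=m$ from \cite{PS86}, and read off the result from Theorems~\ref{qcc} and~\ref{ccc}. The paper states this more tersely (via the two preceding corollaries and the one-line remark that random access coding is the same Boolean function as $\mathrm{INDEX}_m$), but your added detail on the ceiling arithmetic and the explicit upper-bound arrangement for $k_{\mathrm{INDEX}_m}$ is consistent with, and slightly more spelled out than, the paper's argument.
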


Corollary \ref{qrac-cor} solves the open problem in \cite{HINRY06} in 
its best possible form. It also implies the non-existence of 
$(2,1,>1/2)$-RAC shown in \cite{ANTV99}. Note that this fact does not 
come directly from the characterization of $C(f)$ in \cite{PS86}. 

\subsection{Explicit Constructions of QRAC and RAC}
In this subsection, we give an explicit construction 
of $(2^{2n}-1,n,>1/2)$-QRAC and $(2^n-1,n,>1/2)$-RAC that 
leads to a better success probability than what obtained from direct applications 
of Theorems \ref{qcc} and \ref{ccc}. For the case of QRAC, the construction 
is based on the proof idea of Theorem \ref{qcc} combined with the 
property of the index function. Their proofs are omitted due to space constraint.

\begin{theorem}\label{lowerbound-qrac}
For any $n \ge 1$, there exists a $(2^{2n}-1,n,p)$-QRAC 
such that $p \ge \frac{1}{2} + \frac{1}{2\sqrt{(2^n-1)(2^{2n}-1)}}$.
\end{theorem}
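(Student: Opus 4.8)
The goal is to construct, for every $n\ge 1$, a genuine $(2^{2n}-1,n,p)$-QRAC with $p\ge \frac12 + \frac{1}{2\sqrt{(2^n-1)(2^{2n}-1)}}$. The plan is to mimic the converse direction of Theorem~\ref{qcc} but, instead of running the generic shrink-and-shift argument on an arbitrary arrangement, to exploit the special structure of $\mbox{\it INDEX}_{2^{2n}-1}$ so as to track the success probability exactly. Concretely, $\mbox{\it INDEX}_m$ is realized by the $m$-dimensional arrangement whose points are the $2^m$ Boolean vectors $\mathbf p_x = ((-1)^{x_1},\ldots,(-1)^{x_m})$ (one per string $x\in\{0,1\}^m$) and whose hyperplanes are the $m$ coordinate hyperplanes $\mathbf h_i = (\mathbf e_i, 0)$, so that $\mathrm{sign}(\langle \mathbf p_x,\mathbf e_i\rangle) = (-1)^{x_i} = (-1)^{\mbox{\it INDEX}_m(x,i)}$. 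Here $m = 2^{2n}-1 = N^2-1$ with $N=2^n$, which is exactly the dimension in which $N$-level Bloch vectors live, so no padding is needed.

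First I would set up the encoding: map each $x$ to the normalized Bloch vector $\mathbf r_x = \tfrac{1}{(N-1)\sqrt{m}}\,\mathbf p_x$, which has norm $\tfrac{1}{N-1}$ and hence, by Lemma~\ref{balls}, lies in the small ball, so $\pmb\rho_x := \pmb\rho(\mathbf r_x)$ of Eq.~(\ref{mstate_vec_eq}) is a legitimate $n$-qubit state. For the decoding of bit $i$, I would take the POVM given by Lemma~\ref{embed_meas} applied to the hyperplane $\mathbf h_i = (\mathbf e_i, 0)$: since $h_{k+1}=0$ and $\sum h_j^2 = 1$, the constant $\beta_y$ there is $\tfrac{1}{2\sqrt{2(N-1)/N}} = \tfrac{1}{\sqrt{2}}\sqrt{\tfrac{N}{N-1}}$, giving a valid POVM $\{\pmb E_i,\pmb I-\pmb E_i\}$ over $n$-qubit states. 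Then Lemma~\ref{trace_bloch} computes the probability of outcome $0$ exactly: $\mathrm{Tr}(\pmb E_i\pmb\rho_x) = \tfrac12 + \sqrt{\tfrac{2(N-1)}{N}}\sum_j (r_x)_j (e_i)_j$, and plugging in the chosen $\mathbf r_x$ and the $\beta_y$-scaled coordinates of $\mathbf e_i$, this collapses to $\tfrac12 + (-1)^{x_i}\cdot c$ for a single constant $c$ that works out to $\tfrac{1}{2\sqrt{(N-1)(N^2-1)}} = \tfrac{1}{2\sqrt{(2^n-1)(2^{2n}-1)}}$. Since the correct decoding of bit $x_i$ reads the measurement outcome (outcome $0$ interpreted as $x_i=0$ when $(-1)^{x_i}=+1$, etc.), the success probability is $\tfrac12 + c$, which is $>\tfrac12$, establishing the claim.

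The main obstacle is purely computational bookkeeping: making sure the two scaling factors — the $\tfrac{1}{(N-1)\sqrt m}$ shrink on the state side and the $\beta_y$ factor on the POVM side — combine with the $\sqrt{2(N-1)/N}$ prefactor of Lemma~\ref{trace_bloch} to produce exactly $\tfrac{1}{2\sqrt{(2^n-1)(2^{2n}-1)}}$ and not something off by a constant, and checking that one may indeed use the \emph{same} shrink factor for all $x$ and the \emph{same} $\beta_y$ for all $i$ (which is fine here because all points have equal norm and all hyperplanes are unit coordinate vectors through the origin, so the worst-case constants over $x$ and over $y$ coincide with these uniform values). A secondary point is confirming that the generator matrices $\pmb\lambda_i$ can be chosen as (normalized) non-identity tensor products of Pauli matrices, so that the $2^{2n}-1$ coordinates genuinely index an orthogonal basis and the inner-product computation in Lemma~\ref{trace_bloch} is literally the Euclidean dot product of the two Bloch vectors. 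Once these constants are pinned down, the existence of the POVM is immediate from Lemma~\ref{embed_meas} and the existence of the states from Lemmas~\ref{embed_states} and~\ref{balls}, so there is no deeper difficulty.
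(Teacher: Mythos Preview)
Your overall plan---realize $\mbox{\it INDEX}_{N^2-1}$ by the $\pm1$ cube points and coordinate hyperplanes, push the points into the small ball of Lemma~\ref{balls}, and read off the bias via Lemma~\ref{trace_bloch}---is exactly the intended route, and it matches what the paper says it does. But the arithmetic in your last step does \emph{not} give the stated constant. With $\mathbf r_x=\frac{1}{(N-1)\sqrt m}\,\mathbf p_x$ and the generic $\beta_y=\frac{1}{2\sqrt{2(N-1)/N}}$ from Lemma~\ref{embed_meas}, Lemma~\ref{trace_bloch} yields
\[
\Tr(\pmb E_i\pmb\rho_x)\;=\;\frac12+\sqrt{\tfrac{2(N-1)}{N}}\cdot\frac{(-1)^{x_i}}{(N-1)\sqrt m}\cdot\beta_y
\;=\;\frac12+\frac{(-1)^{x_i}}{2(N-1)\sqrt{N^2-1}},
\]
which is a factor $\sqrt{N-1}=\sqrt{2^n-1}$ \emph{weaker} than $\frac{1}{2\sqrt{(N-1)(N^2-1)}}$. (They agree only at $n=1$.) So the ``purely computational bookkeeping'' you flag as the main obstacle is, in fact, where the argument breaks.

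The missing idea is that Lemma~\ref{embed_meas} is a worst-case bound over \emph{all} generator systems, and for the particular hyperplanes $\mathbf h_i=(\mathbf e_i,0)$ together with Pauli-tensor generators it is far from tight. If you take $\pmb\lambda_i=\sqrt{2/N}\,P_i$ with $P_i$ a non-identity Pauli string, then $P_i$ has eigenvalues $\pm1$, so $\bigl\{\tfrac12(\pmb I+P_i),\tfrac12(\pmb I-P_i)\bigr\}$ is already a legitimate (projective) POVM. In Bloch coordinates this means you may take $\beta=\tfrac12\sqrt{N/2}$, which is $\sqrt{N-1}$ times larger than the $\beta_y$ you used. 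Plugging this $\beta$ into the same computation gives
\[
\Tr(\pmb E_i\pmb\rho_x)=\frac12+\sqrt{\tfrac{2(N-1)}{N}}\cdot\frac{(-1)^{x_i}}{(N-1)\sqrt m}\cdot\frac12\sqrt{\tfrac{N}{2}}
=\frac12+\frac{(-1)^{x_i}}{2\sqrt{(N-1)(N^2-1)}},
\]
recovering the claimed bound. Note that the extra factor cannot be squeezed out on the state side instead: for the sign pattern $x=0$ one has $\bigl\|\sum_i P_i\bigr\|_{\mathrm{op}}=\|(I+X+Y+Z)^{\otimes n}-I\|_{\mathrm{op}}$, which already exceeds $\sqrt{N^2-1}$ for $n\ge 2$, so the small-ball radius $\tfrac{1}{N-1}$ is essentially what you get for these encodings. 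The improvement has to come from the measurement, and that is precisely what exploiting the Pauli eigenstructure buys you.
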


We can also obtain the upper bound of the success probability of 
$(2^{2n}-1,n,p)$-QRAC from the asymptotic bound by Ambainis et al. 
\cite{ANTV99}: 
For any $(2^{2n}-1,n,p>1/2)$-QRAC, $p\leq \frac{1}{2}+\sqrt{\frac{(\mathrm{ln}2)n}{2^{2n}-1}}$. 

It remains open to close the gap between the lower bound, $\approx 
1/2+\Omega(1/2^{1.5n})$, 
and the upper bound, $\approx 1/2+O(\sqrt{n}/2^n)$, of the success probability 

Similarly, for the case of RAC we have the following theorem. 

\begin{theorem}\label{prac}
There exists a $(2^n-1,n,p)$-RAC such that $p \ge \frac{1}{2} + \frac{1}{2 (2^{n+1}-5)}$.
\end{theorem}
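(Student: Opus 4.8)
The plan is to construct the RAC explicitly, analogously to the QRAC case above: take an arrangement realizing $\mathit{INDEX}_{2^{n}-1}$, turn it into a classical protocol via the protocol-from-arrangement construction inside the proof of Theorem~\ref{ccc}, and then sharpen the analysis (and slightly modify the protocol) using the structure of $\mathit{INDEX}$. Since $\mathit{INDEX}_{m}$ with $m=2^{n}-1$ is realized by an arrangement of dimension $m$ and $\lceil\log(m+1)\rceil=n$, Theorem~\ref{ccc} already hands us \emph{some} $n$-bit, $2^{n}$-message RAC; the point is to choose the arrangement and tune the construction so that the worst-case success probability comes out as $\tfrac12+\tfrac1{2(2^{n+1}-5)}$ rather than the cruder value one reads off directly.

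Concretely, I would proceed in four steps. First, fix the arrangement for $\mathit{INDEX}_{m}$ with points $\mathbf p_{x}=((-1)^{x_{i}})_{i=1}^{m}\in\mathbb R^{m}$ (for $x\in\{0,1\}^{m}$) and hyperplanes $\mathbf h_{j}=(\mathbf e_{j},0)\in\mathbb R^{m+1}$, and check $\delta(\mathbf p_{x},\mathbf h_{j})=\mathrm{sign}((-1)^{x_{j}})=\pmb M_{\mathit{INDEX}_{m}}(x,j)$. Second, run the construction from the proof of Theorem~\ref{ccc} on this arrangement; because the points are $\pm1$-valued and the hyperplanes are coordinate projections with zero threshold, the quantities there collapse to $s=1$, $\beta_{y}=1$ for every $y$, and $\alpha_{x}=1+2\,|\{i:x_{i}=0\}|$, the protocol uses $m+1=2^{n}$ messages, and Bob's success probability on $(x,j)$ is exactly $\tfrac12+\tfrac1{2\alpha_{x}}$, hence $\ge\tfrac12+\tfrac1{2(2m+1)}$ in the worst case $x=0^{m}$. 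Third, tighten the bottleneck: the only ``hard'' inputs are those of extreme Hamming weight (near-$0^{m}$, and by the bit-complement symmetry of $\mathit{INDEX}$ near-$1^{m}$), so by letting Bob use a query-dependent randomized decoding and re-weighting Alice's message distribution (together with the extra message that the Theorem~\ref{ccc} construction introduces), I would push the worst case off these inputs, so that the effective value of $\alpha_{x}$ is governed by weight-$\le m-2$ inputs, i.e.\ by $2(m-2)+1=2^{n+1}-5$. Fourth, verify that every message probability $q_{i}^{x}$ and decoding bias $l_{j}^{y}$ stays in $[0,1]$ and normalizes correctly, and conclude $p\ge\tfrac12+\tfrac1{2(2^{n+1}-5)}$.

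The main obstacle is the third step, aggravated by the fact that the Theorem~\ref{ccc} construction already spends its entire budget of $k+1=2^{n}$ messages on a $k$-dimensional arrangement: there is no room to devote separate messages to the extreme inputs, so the gain must come purely from re-weighting and a cleverer, query-dependent decoding rule, and one must check that this does not pull the margin on the ``typical'' inputs below $\tfrac1{2(2^{n+1}-5)}$. Pinning down exactly how much is saved --- hence the precise constant $5$ --- is the delicate bookkeeping; the arrangement itself, the reduction via Theorem~\ref{ccc}, and the arithmetic with $\alpha_{x}$ and $\beta_{y}$ are routine.
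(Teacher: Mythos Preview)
The paper omits the proof of Theorem~\ref{prac} (and of Theorem~\ref{lowerbound-qrac}) ``due to space constraint'', so a direct comparison is impossible; the only hint is that the QRAC construction ``is based on the proof idea of Theorem~\ref{qcc} combined with the property of the index function'' and that the RAC case is treated ``similarly''. Your overall plan---run the standard $\mathit{INDEX}_m$ arrangement through the Theorem~\ref{ccc} construction and then sharpen using the structure of $\mathit{INDEX}$---is therefore consistent with what the paper indicates, and your steps~1--2 correctly yield the baseline margin $1/(2(2m+1))$.

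The gap is step~3, and some of the reasoning around it is off. With $\mathbf p_x=((-1)^{x_i})$ one has $\alpha_x=1+2z$ where $z$ is the number of \emph{zeros} in $x$, so the near-$1^m$ inputs (small $z$) are the \emph{easy} ones; the bit-complement symmetry of $\mathit{INDEX}$ does not transfer to this asymmetric encoding, and your ``weight~$\le m-2$'' should read ``weight~$\ge 2$'' (equivalently $z\le m-2$) for the conclusion $\alpha_x\le 2m-3$. More seriously, you give no mechanism for handling the genuinely hard weight-$0$ and weight-$1$ inputs. If Bob keeps a decoding of the Theorem~\ref{ccc} shape---$l^i_j$ determined only by whether $j=i$, $j\in[m]\setminus\{i\}$, or $j=m{+}1$---then for any Alice distribution the margin at a weight-$1$ input is at most $1/(2(2m-1))$, already below the target $1/(2(2m-3))$; so ``re-weighting'' alone cannot close the gap. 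Reaching the stated bound appears to require changing what the $2^n$ messages \emph{mean}, not just their probabilities. As a check, for $n=2$ the target is $p=2/3$, and one attains it by letting the four messages be the even-parity strings in $\{0,1\}^3$, with Bob outputting the $i$-th bit of the received word and Alice sending $x$ itself if it has even parity and otherwise a uniformly random even-parity neighbour---a code-based scheme rather than a perturbation of the Theorem~\ref{ccc} protocol. Your step~3 needs a concrete idea of this sort, together with its extension to general $n$, before the argument goes through.
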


The success probability of $(2^n-1,n,p)$-RAC can also be bounded
by the asymptotic bound in \cite{ANTV99} 
: For any $(2^{n}-1,n,p>1/2)$-QRAC, $p\leq \frac{1}{2}+\sqrt{\frac{(\mathrm{ln}2)n}{2^n-1}}$.

\section*{Acknowledgements}
R.R. would like to thank David Avis of McGill Univ. for introducing the 
world of arrangements, and Kazuyoshi Hidaka and Hiroyuki Okano of Tokyo 
Research Lab. of IBM Japan for their supports. We also thank Tsuyoshi 
Ito of Univ. of Tokyo and Hans Ulrich Simon of Ruhr-Universit\"{a}t 
Bochum for helpful discussion.


\begin{thebibliography}{Gur91}


\bibitem{Aar04}
S. Aaronson. Limitation of quantum advice and one-way communication. 
{\em Theory of Computing} {\bf 1} (2005) 1--28. 

\bibitem{Aar06}
S. Aaronson, The learnability of quantum states, quant-ph/0608142.

\bibitem{AA05}
S. Aaronson and A. Ambainis. Quantum search of spatial regions. {\em Theory of Computing} {\bf 1} (2005) 47--79. 

\bibitem{ANTV99}
A. Ambainis, A. Nayak, A. Ta-shma and U. Vazirani. 
Dense quantum coding and a lower bound for 1-way quantum automata. 
{\em Proc. 31st STOC}, pp.~376--383, 1999. Journal version appeared in {\em J. ACM} {\bf 49} (2002) 496--511. 

\bibitem{BJK04} Z. Bar-Yossef, T. S. Jayram, and I. Kerenidis, Exponential separation of quantum 
and classical one-way communication complexity. {\em Proc.\ 36th STOC}, pp.~128--137, 2004. 

\bibitem{BCW98} H. Buhrman, R. Cleve and A. Wigderson. Quantum vs. classical communication and computation. 
{\em Proc. 30th STOC}, pp.~63--68, 1998. 

\bibitem{BCWW01} H. Buhrman, R. Cleve, J. Watrous and R. de Wolf. 
Quantum fingerprinting. {\em Phys. Rev. Lett.} {\bf 87} (2001) Article no.\ 167902.

\bibitem{Fors02}
J. Forster. A linear lower bound on the unbounded error probabilistic communication complexity. 
{\em J. Comput. Syst. Sci.} {\bf 65} (2002) 612--625.

\bibitem{FKLMSS01}
J. Forster, M. Krause, S. V. Lokam, R. Mubarakzjanov, N. Schmitt, H. U. Simon. 
Relation between communication complexity, linear arrangements, and computational complexity. 
{\em Proc. 21st FSTTCS, Lecture Notes in Comput. Sci.} {\bf 2245} (2001) 171--182

\bibitem{FS06}
J. Forster and H. U. Simon. On the smallest possible dimension and the largest possible margin of linear arrangements 
representing given concept classes. {\em Theoret. Comput. Sci.} {\bf 350} (2006) 40--48. 

\bibitem{GKKRW06}
D. Gavinsky, J. Kempe, I. Kerenidis, R. Raz and R. de Wolf. 
Exponential separations for one-way quantum communication complexity, 
with applications to cryptography, to appear in {\em Proc. 39th STOC}. Also, quant-ph/0611209.

\bibitem{GKW06}
D. Gavinsky, J. Kempe and R. de Wolf. Strengths and weaknesses of quantum fingerprinting. 
{\em Proc. 21st CCC}, pp.~288--298, 2006. 

\bibitem{HINRY06}
M. Hayashi, K. Iwama, H. Nishimura, R. Raymond and S. Yamashita. 
$(4,1)$-quantum random access coding does not exist -- one qubit is not enough to recover one of four bits. 
{\em New J. Phys.} {\bf 8} (2006) Article no.\ 129.  

\bibitem{JS01}
L. Jak\'{o}bczyk and M. Siennicki. Geometry of Bloch vectors in two-qubit system. 
{\em Phys.~Lett.~A} {\bf 286} (2001) 383--390. 

\bibitem{KK04}
G. Kimura and A. Kossakowski. The Bloch-vector space for $N$-level systems 
-- the spherical-coordinate point of view. 
{\em Open Sys. Information Dyn.} {\bf 12} (2005) 207--229. Also, quant-ph/0408014.

\bibitem{Kla00}
H. Klauck. On quantum and probabilistic communication: 
Las Vegas and one-way protocols. {\em Proc. 32nd STOC}, pp.~644--651, 2000.

\bibitem{Kla01}
H. Klauck. Lower bounds for quantum communication complexity. 
{\em Proc. 42nd FOCS}, pp.~288-297, 2001.

\bibitem{KN97}
E. Kushilevitz and N. Nisan. {\em Communication Complexity}. Cambridge, 1997.

\bibitem{New91}
I. Newman. Private vs. common random bits in communication complexity. 
{\em Inform. Process. Lett.} {\bf 39} (1991) 67--71.

\bibitem{Nay99}
A. Nayak. Optimal lower bounds for quantum automata and random access codes. 
{\em Proc. 40th IEEE FOCS
}, pp.~369--376, 1999. 

\bibitem{NS06}
A. Nayak and J. Salzman. Limits on the ability of quantum states to convey classical messages. 
{\em J. ACM} {\bf 53} (2006) 184--206.

\bibitem{NC00} 
M. A. Nielsen and I. L. Chuang. {\em Quantum Computation and Quantum Information}, Cambridge, 2000. 

\bibitem{PS86}
R. Paturi and J. Simon. Probabilistic communication complexity. 
{\em J. Comput. Syst. Sci.} {\bf 33} (1986) 106--123.

\bibitem{Razb03}
A. Razborov. Quantum communication complexity of symmetric predicates. 
{\em Izvestiya Mathematics} {\bf 67} (2003) 145--159.

\end{thebibliography}
\end{document}